\renewcommand{\multirowsetup}{\centering}
\newcommand{\dist}{{\rm dist}}
\newcommand{\near}{{\rm near}}
\newcommand{\G}{{\widehat G}}
\newcommand{\V}{{\widehat V}}
\newcommand{\E}{{\widehat E}}
\newcommand{\w}{{\widehat w}}
\newcommand{\X}{{\widehat X}}
\newtheorem{theorem}{Theorem}[section]
\newtheorem{lemma}[theorem]{Lemma}
\newtheorem{observation}[theorem]{Observation}
\newtheorem{definition}[theorem]{Definition}
\newtheorem{corollary}[theorem]{Corollary}
\begin{document}

\sloppy



\title{An Oblivious Spanning Tree for \\Buy-at-Bulk Network Design Problems}

\author{
 {Srivathsan Srinivasagopalan, Costas Busch, S.S. Iyengar} \\
   {Computer Science Department}	\\
   {Louisiana State University}	\\
   {\{ssrini1, busch, iyengar\}@csc.lsu.edu}
}

\date{}  

\maketitle

\begin{abstract}
We consider the problem of constructing a single spanning tree for the single-source buy-at-bulk network design problem for doubling-dimension graphs. We compute a spanning tree to route a set of demands (or data) along a graph to or from a designated root node. The demands could be aggregated at (or symmetrically distributed to) intermediate nodes where the fusion-cost is specified by a non-negative concave function $f$. We describe a novel approach for developing an oblivious spanning tree in the sense that it is independent of the number of data sources (or demands) and cost function  at intermediate nodes.

	\hspace{1cm}To our knowledge, this is the first paper to propose a single spanning tree solution to this problem (as opposed to multiple overlay trees). There has been no prior work where the tree is oblivious to both the fusion cost function and the set of sources (demands). We present a deterministic, polynomial-time algorithm for constructing a spanning tree in low doubling graphs that guarantees $\log^{3}D\cdot\log n$-approximation over the optimal cost, where $D$ is the diameter of the graph and $n$ the total number of nodes. With constant fusion-cost function our spanning tree gives a $O(\log^3 D)$-approximation for every Steiner tree to the root. 


\end{abstract}


\section{Introduction}

Buy-at-bulk network design problems arise in scenarios where economies of scale applies or when the availability of capacity in discrete units result in concave cost function on the edges. As observed in \cite{1170547}, a commonly seen application is in telecommunication networks where bandwidth on a link can be purchased in some discrete units $u_1 < u_2 < \dots < u_n$ with respective costs $c_1 < c_2 < \dots < c_n$. The economies of scale exhibits the property where the cost per bandwidth decreases as the number of units purchased gets larger: $c_1/u_1 > c_2/u_2 > \dots c_n/u_n$. This behavior justifies the sale of network capacity in ``wholesale'' (or ``volume discount'') where more the capacity is bought, the cheaper is the price per unit of bandwidth.

We study the single-source buy-at-bulk (SSBB) network design problem with the following constraints: an unknown number of source (or demand) nodes and unknown concave transportation cost function $f$. An abstraction of this problem can be found in many applications, one of which is data fusion in wireless sensor networks where the given constraints are assumed unknown or vary over time. Others include design of VLSI power circuitry, Transportation \& Logistics (railroad, water, oil, gas pipeline construction) etc. For simplicity, we consider data fusion problem in communication networks, though SSBB can also be applied to data distribution problems; our solution holds for both the cases. 

As mentioned in \cite{644191}, if information flows from $k$ different sources over a link, then, the total information that needs to be transmitted is $f(k)$, where the function $f$ is called a \textit{canonical} fusion function where $f$ is concave, non-decreasing and $f(0)=0$. If $f$ is known in advance, then, the problem of building an optimal fusion tree is well understood \cite{276725}. However, here we consider the case where both $f$ and the number of data sources are unknown. The focus of this paper is to construct an oblivious spanning tree according to the following definition:

\begin{definition} [\bf{$k$-Oblivious Spanning Tree ($k$-OST)}]
A spanning tree $T$ with root node $s$ in a graph $G$ is $k$-oblivious if it provides a $k$-factor approximation to any data fusion problem toward sink node $s$ with an arbitrary set of data sources and arbitrary canonical fusion function $f$.
\end{definition}

In this paper, we consider building an oblivious spanning tree for doubling dimension graphs. Doubling dimension graphs has been used in many different contexts including compact routing in wired networks \citep{1154086,1432318}, traveling salesman, navigability and problems related to modeling the structural properties of the Internet distance matrix for distance estimation \citep{1568322, Fraigniaud07theinframetric}. As noted in \cite{Fraigniaud06adoubling}, it has become a key concept to measure the ability of network to support efficient algorithms or to realize specific tasks efficiently. For wireless networks, this concept has found many uses in solving many distributed communication problems \citep{1073826}, distributed resource-management \citep{gao09distributed}, information exchange among producers and consumers \citep{FuGui2006}, and for determining other performance qualities such as energy-conservation in wireless sensor networks \citep{1132922}.

SSBB is NP-Hard as the Steiner Tree problem is its special case (when $f(x) = 1$) due to reduction from Steiner tree problem \cite{589033}.  

\subsection{Contribution}

We build a \emph{single} oblivious spanning tree for doubling-dimension graphs, such that the tree is independent of the data sources, and can accommodate \emph{any} canonical fusion-cost function. Our approach gives a deterministic, polynomial-time algorithm that guarantees $O(2^{10\rho}\log^{3}D \log n)$-OST (measured as the total involved communication cost), where $\rho$ is the doubling dimension factor. For constant fusion cost functions, the $k$-OST behaves like a $k$-oblivious steiner tree. We define a \textit{$k$-oblivious steiner tree} as a steiner tree that provides a $k$-approximation to any data fusion problem towards the sink with arbitrary number of data sources. For constant fusion-cost function $(f(\cdot) = c)$, we obtain a $O(2^{10\rho}\log^3 D)$-oblivious steiner tree to sink $s$.

Our spanning tree construction is based on the following techniques. We partition the nodes in a hierarchical fashion. The selection of nodes for a given `level' of hierarchy is based on their mutual distances proportional to the level. Nodes of successive levels are connected by shortest paths. The intersecting paths are appropriately modified to result in a spanning tree. A modified tree is built from the spanning tree to ensure that all paths have appropriate end-nodes. Analysis is done on this modified tree.

\subsection{Related Work}

SSBB problems have been primarily considered in both Operations Research and Computer Science literatures in the context of flows with concave costs. SSBB problem was first introduced by \cite{589033} and a $O(\log^2 n)$-approximation was given by  \cite{796341} using metric tree embedding. \cite{276725} further improved this result to $O(\log n)$. \cite{380827} provided the first constant-factor approximation to the problem.

\renewcommand{\multirowsetup}{\centering}
\newcommand {\otoprule}{\midrule [\heavyrulewidth]}  
\newcommand{\capbot}[1]{%
\vspace{8pt}\footnotesize\raggedright #1}%
	
\begin{table}																										
\begin{center}
	\caption{Our results and comparison with previous results for data-fusion schemes. IAF - Independent of fusion function, ISS - Independent of source-set, $n$ is the total number of nodes in the topology, $k$ is the total number of source nodes.}	
  \footnotesize
  \renewcommand{\arraystretch}{1.5}
 	\begin{tabular}{*{3}{m{2.2cm}}*{2}{>{$}c<{$}}*{1}{>{$}l<{$}}{m{1.4cm}}}  
    \otoprule
    \multicolumn{1}{b{1.5cm}}{\bfseries Related Work}
    & \multicolumn{1}{b{1.5cm}}{\bfseries Algorithm Type}
    & \multicolumn{1}{b{1.5cm}}{\bfseries Graph Type}
    & \multicolumn{1}{b{0.9cm}}{\bfseries IAF}
    & \multicolumn{1}{b{0.9cm}}{\bfseries ISS}
    & \multicolumn{1}{b{1.5cm}}{\bfseries Approx Factor}  
    & \multicolumn{1}{b{1.3cm}}{\bfseries Tree Type}\\
    \otoprule 
    \raggedright Lujun Jia\\ \textit{et al.} \cite{JiaNRS06}
    & Deterministic
    & \raggedright Random\\Deployment
    & \times
    & \checkmark
    & O(\log n)
    & One Overlay \\
    \hline
    \raggedright Ashish Goel\\ \textit{et al.} \cite{644191}
    & Randomized
    & General Graph $\bigtriangleup$-inequality
    & \checkmark
    & \times
    & O(\log k) 
    & One Overlay \\
    \hline
    \raggedright Ashish Goel\\ \textit{et al.} \cite{DBLP:journals/corr/abs-0908-3740}
    & Probabilistic
    & General Graph
    & \checkmark
    & \times
    & O(1) 
    & Multiple Overlay \\
    \hline
    \rowcolor[gray]{.95}
    This paper
    & Deterministic
    & \raggedright Low Doubling\\ Dimension
    & \checkmark
    & \checkmark
    & O(\log^3D\cdot\log n)
    & One Spanning \\
    \bottomrule
 
  \end{tabular}
  \label{table:comparison}

\end{center} 
\end{table}

\cite{644191} build an overlay tree on graphs that satisfy triangle-inequality based on maximum matching algorithm that guarantees $1 + \log k$ approximation, where $k$ is the number of sources. An overlay tree, if projected to a graph, may not be a tree (could have cycles). In a related paper by \cite{DBLP:journals/corr/abs-0908-3740}, the authors construct (in polynomial time) a set of overlay trees from a given general graph such that the expected cost of a tree for any $f$ is within an $O(1)$-factor of the optimum cost for that $f$. 

\cite{JiaNRS06} build Group Independent Spanning Tree Algorithm (GIST) that constructs an overlay tree for randomly deployed nodes in 2D. The tree (that is oblivious to the number of data sources) simultaneously achieves $O(\log n)$-approximate fusion cost and $O(1)$ approximate delay. However, their solution assumes a constant fusion cost function. We summarize and compare the related work in Table \ref{table:comparison}.

\cite{1060649} provide approximation algorithms for TSP, Steiner Tree and set cover problems. They present a polynomial-time $(O(\log(n)),O(\log(n)))$-partition scheme for general metric spaces. An improved partition scheme for doubling metric spaces is also presented that incorporates constant dimensional Euclidean spaces and growth-restricted metric spaces. The authors present a polynomial-time algorithm for Universal Steiner Tree (UST) that achieves polylogarithmic stretch with an approximation guarantee of $O(\log^4 n/\log \log(n))$ for arbitrary metrics and derive a logarithmic stretch, $O(\log(n))$ for any doubling, Euclidean, or growth-restricted metric space over $n$ vertices.

An earlier version of this work with preliminary results in the context of data aggregation appeared as a brief announcement in \cite{sri-algosensors-2009}.


\section{Definitions}

Consider a weighted graph $G = (V, E, w)$, $w: E \longrightarrow \mathbb{Z}^+$. Let $s \in V$ be the sink node. For any two nodes $u,v \in V$ let $\dist_G(u,v)$ denote the \textit{distance} between $u,v$ (measured as the total weight of the shortest path that connects $u$ and $v$). Let $D$ denote the \textit{diameter} of $G$, that is, $D = \max_{u,v \in V} \dist_G(u,v)$. Given a subset $V' \subseteq V$, we denote $\dist_G(u,V')$ the smallest distance between $u$ and any node in $V'$. We also define $\near_G(u,V') = \{ v \in V' :~ \dist_G(u,v) = \dist_G(u,V') \}$ to be the set of nodes in $V'$ that $u$ is closest to. Given a set of nodes $H \subseteq V$ and parameter $d$, we define \emph{Maximal Independent Set of $G$ for distance $d$} as $I = MIS(G, H, d)$ to be an arbitrary maximal set of nodes in $G$ such that $H \subseteq I$ and for any $u,v \in G \setminus H$, $\dist_G(u,v) \geq d$ and $\dist_G(u, x) \geq d$ where $x \in H$. There is a simple greedy algorithm to compute it. A special case is the notion of $d$-independent nodes computed by $I = MIS(G,\phi,d)$.

We adapt the definition of doubling-dimension graph from \citet{eemcs8079, 946308}.

\begin{definition} [\bf{r-Neighborhood}]
Given a graph $G = (V,E)$, the $r$-neighborhood of any vertex $u \in V$, $N_{G}(u,r)$, is defined as the set of nodes whose distance (in hops) is at most $r$ from $u$; $N_{G}(u,r) = \{v~|~dist_{G}(u,v) \leq r\}$. 
\end{definition}

\begin{definition} [\bf{Doubling Dimension Graph}]
A graph $G$ with the smallest $\rho$ such that every $R$-neighborhood is completely covered by atmost $2^\rho$ $R/2$-neighborhood is said to doubling. If $\rho$ is bounded by a constant and is small, we say that $G$ is \textbf{doubling} and has a low dimension. 
\end{definition}

\begin{observation}
In any $R$-neighborhood of doubling dimension graphs, there are at most $m$ $R/(2^j)$-neighborhoods where $m = 2^{\rho \log (\frac{R}{R/(2^j)})}$, $j \geq 0$.
\end{observation}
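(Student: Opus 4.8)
The plan is to prove the statement by induction on $j$, simply iterating the defining property of a doubling graph. Since the claim only involves the ratio $\frac{R}{R/(2^j)} = 2^j$, and logarithms here are base $2$, the quantity $m = 2^{\rho \log\left(\frac{R}{R/(2^j)}\right)}$ is just $m = 2^{\rho j}$; so what I actually want to establish is that every $R$-neighborhood can be covered by at most $2^{\rho j}$ neighborhoods of radius $R/2^j$.

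First I would dispatch the base case $j = 0$, which is immediate: an $R$-neighborhood covers itself and $2^{\rho\cdot 0} = 1$. For the inductive step I would assume that an arbitrary $R$-neighborhood $N_G(u, R)$ is covered by $t \le 2^{\rho(j-1)}$ neighborhoods of radius $R/2^{j-1}$. Applying the doubling definition to each of these $t$ neighborhoods (with its radius $R/2^{j-1}$ playing the role of ``$R$'' in the definition), each is covered by at most $2^\rho$ neighborhoods of radius $(R/2^{j-1})/2 = R/2^j$. Taking the union of all these sub-coverings gives a covering of $N_G(u,R)$ by at most $t\cdot 2^\rho \le 2^{\rho(j-1)}\cdot 2^\rho = 2^{\rho j}$ neighborhoods of radius $R/2^j$, which closes the induction.

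I do not expect a genuine obstacle; the only things to watch are bookkeeping details. One is to fix the base of the logarithm (base $2$) so that $\rho\log\!\left(\frac{R}{R/(2^j)}\right)$ collapses to $\rho j$. Another is that, since distances are measured in hops, $R/2^j$ need not be an integer, so one should either read radii as reals or restrict to $R$ a power of two (the only regime used later). A third, purely cosmetic point is that the covering neighborhoods produced at each stage may be centered anywhere in the graph — the doubling definition as stated does not require the centers to lie inside the parent neighborhood — but since the statement only asks for a covering by $R/2^j$-neighborhoods, this is harmless. The heart of the argument is nothing more than composing the doubling bound $j$ times: $2^\rho \cdot 2^\rho \cdots 2^\rho = 2^{\rho j}$.
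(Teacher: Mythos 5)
Your proof is correct and is exactly the intended argument: the paper states this as an unproved observation, treating it as the immediate consequence of iterating the doubling definition $j$ times, which is precisely what your induction formalizes (with $m = 2^{\rho\log 2^j} = 2^{\rho j}$). The bookkeeping caveats you flag (base-$2$ logarithm, non-integer radii, centers of the covering neighborhoods) are reasonable but do not change the substance.
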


\section{Spanning Tree Construction}

We start with an informal description of the construction of the spanning tree. We build the tree in a hierarchical manner that has $\kappa = O(\log D)$ levels. At level $\kappa = \lceil\log D\rceil$ is the sink ($I_\kappa = \{s\}$) and at level $\kappa = 0$ are the individual nodes ($I_0 = V$). Each level $i$ of the hierarchy is built by identifying a set of independent nodes, $I_i = \{\ell_i^1,\ell_i^1,\dots,\ell_i^n\}$, ($n\geq 0$ and the subscript corresponds to level), that are $2^i$ distance (in hops) apart. The hierarchy consists of $\kappa$ levels of independent nodes $I_0,\ldots,I_\kappa$ where $I_i \subseteq V$, $0\leq i \leq \kappa$. Members of $I_i$ are also called leaders of level $i$. Some leaders could belong to more than one level (eg., sink $s$). Leaders of consecutive levels are connected by shortest paths to build a tree. A formal description appears in Algorithm \ref{alg:SpanningTree}.


\begin{algorithm}
\KwIn{Graph $G$ with sink $s$.}
\KwOut{A spanning tree $T_s$.}
\BlankLine

$P \gets \phi$; $I_{\kappa} \gets \{s\}$ \tcp*{$\kappa \gets \lceil\log D\rceil$}
$P^{reg} \gets \phi$; $P^{pr} \gets \phi$ \tcp*{List of regular and pruned paths}

\ForEach {\mbox{level} $i = \kappa - 1 ~\KwTo~  0$} {
$I_i \gets MIS(G, I_{i+1}, 2^i)$\;

\ForEach {$v \in I_i$}{
	\tcp{$v$ chooses nearest $\ell \in I_{i+1}$ as parent.}
	$\ell \gets near_G(v, I_{i+1})$  \tcp*{If $s \in near(v, I_{i+1})$, $s$ is chosen.}
	$p_v \gets \mbox{Compute shortest path from $v$ to } \ell$\;
	\eIf {$p_v$ intersects any path at level $> i$ at point $u$}{
		\tcp{Prune path $p_v$ by removing segment from $u$ to $\ell$}
		$p^\prime \gets \mbox{path segment from $v$ to } u$\;
		$P^{pr}_i \gets P^{pr}_i \cup p^\prime$\;
	}{
		$P^{reg}_i \gets P^{reg}_i \cup p_v$\;
	}
}
}

$P \gets \bigcup_{i=0}^{i = \kappa-1}P^{reg}_i \cup \bigcup_{i=0}^{i = \kappa-1}P^{pr}_i$

\Return $T_s$ formed by the paths in $P$\;
\caption{Spanning Tree}
\label{alg:SpanningTree}
\end{algorithm}


The construction of hierarchical levels of independent nodes is top-down. $I_i$ is computed by $MIS(G,I_{i+1},2^i)$, for $0 \leq i \leq \kappa -1$. $I_i$ will contain all the $2^j$-independent nodes of higher levels $j$, $i < j \leq \kappa$ as well as a $2^i$-independent set of nodes. We enforce the constraint that $s \in I_i$ for every $I_i$. Note that each node $v \in I_i \setminus I_{i+1}$ has to be within distance $2^{i+1} - 1$ to at least one node in $I_{i+1}$ (otherwise $v$ must be a member of $I_{i+1}$).

Paths are also constructed in a top-down fashion. A path at any level $i$, $p_i$ starts at some leader at level $i$ and ends at a leader at level $i+1$. A set of all paths at level $i$ is denoted as $P_i$ and the set of all paths of all levels is denoted by $P = \{P_{\kappa-1}, P_{\kappa-2},\ldots, P_2, P_1\}$. We begin constructing paths from level $\kappa-1$ to sink $s$ by computing shortest paths from each node of $I_{\kappa-1}$ to $s$. We continue constructing the spanning tree by computing the paths for the remaining levels. Suppose we have constructed paths for level $i$; we will construct $P_{i-1}$. Each node $v \in I_{i-1}$ \emph{prefers} some arbitrary node in $near_G(v,I_i)$; if $s \in near_G(v,I_i)$, then, $s$ is preferred. Shortest paths from each node $v$ are constructed to their respective leaders.

When paths for all levels are built, the resulting stucture may not be a tree. It could result in a graph that might have intersecting paths. Define \emph{regular} paths as paths that do not intersect any (higher-level) path on their way to their end-nodes. The paths of $P_{\kappa-1}$, are regular paths, since there were no higher-level paths to intersect and are included in $P^{reg}_{\kappa-1}$. 

Define \emph{pruned paths} as those paths that intersect paths of higher level. If a path $p_{i}$ intersects a path $p_j$ ($j > i$) along its way to $\ell_{i+1}$, $p_{i}$ is pruned from the intersection point to its destination. Such paths are included in $P^{pr}_{i}$. This pruning of intersecting paths ensures the structural property of a spanning tree (see Figure \ref{fig:ModifiedPath}). 

Note that regular paths of the same level could intersect and continue on different directions to reach a common leader. In this case, one of the paths is modified to use the same segment as the other after the intersection point. Another scenario is when two paths (say from $u$ and $v$ of level $i$) intersect at $m$ and proceed to their respective endnodes $x$ and $y$.  In this case, either $v$ or $u$ will choose a common leader and appropriately modify its path. In both these scenarios, the resulting paths remain regular even if they overlap. Note that in both the cases, the path segments, after intersection, should have the same length.

The spanning tree algorithm executes in $\kappa$ rounds and each round computes $MIS$ (in $O(|E|)$ time, assuming that the input is given by an adjacency list) and shortest paths (in $|E|\log n)$ time steps). This amounts to a total running time of $O(\log D \cdot |E|\log n)$.

\section{Modified Tree Construction}
The pruned paths in the spanning tree $T$ will not have leaders as end-nodes. To ensure that end-nodes of all paths are leaders, we modify $T$ to $\overline{T}$. We begin with an overview of the modified tree construction. We construct $\overline{T}$ from $T$ by assigning alternate leaders to those paths whose `upper' sections have been pruned. We first begin by assigning \emph{levels} to all the nodes of regular paths by {\ttfamily{AssignLevels}} (see appendix) and including those paths in $\overline{T}$. Then, we begin a top-down, level-by-level process where we `modify' the pruned paths by extending the pruned paths to their newly assigned alternate leaders. Note that a modified path could be a concatenation of multiple pruned paths. Then, we assign levels to the nodes of the recently modified path as well and include this modified path in $\overline{T}$. The end of this process results in a tree $\overline{T}$. A more formal description appears in Algorithm \ref{alg:ModifiedTree}.


\begin{algorithm}[!ht]
\SetKwFunction{AssignLevels}{AssignLevels}
\SetKwFunction{ModifyPath}{ModifyPath}
\SetKwFunction{UpdateTree}{UpdateTree}
\KwIn{Spanning Tree $T$ rooted at $s$.}
\KwOut{A modified tree $\overline{T}$.}
\BlankLine

$\overline{T} \gets \phi$  \tcp*{$T = P = \{P_{\kappa-1}, P_{\kappa-2},\dots,P_1, P_0 \}$}

\tcp{Assign Levels to all nodes in all regular paths in $T$.}
$i \gets \kappa-1$  \tcp*{start from second level from top}

\While{$i \geq 0$}{
	\ForEach{$p_i \in P^{reg}_i$}{
		\tcp{$v_a$ and $v_b$ are the start and end nodes of path $p_i$}
		$H \gets \{v_a, v_b\}$\ \tcp*{$v_a$ is at same level as that of $p_i$.}
		\AssignLevels($p_i$, $H$, $i$)\;
		$\overline{T} \gets \overline{T} \cup p_i$\;
	}
	$i \gets i-1$\; 
}

\tcp{Pruned paths in $\overline{T}$ - Modify paths and assign levels.}

$\omega \gets \kappa-2$\;
\While{$\omega > 0$}{
	\ForEach{$p_{\omega} \in P^{pr}_{\omega}$}{
		$\overline{p}_{\omega} \gets \ModifyPath(p_{\omega},p_i)$ \tcp*{$p_\omega$ intersects $p_i$, $i > \omega$ and $v_b^\prime$ be the elected pseudo-leader. $p_i$ may be a modified path itself.}
		$\overline{T} \gets \overline{T} \cup \overline{p}_{\omega}$\;
		$H \gets \{v_a, v_b^\prime\}$ \tcp*{$v_a$ and $v_b^\prime$ are the start and end nodes of $\overline{p}_{\omega}$.}
		\AssignLevels($\overline{p}_{\omega}$, $H$, $\omega$)\;
	}
	$\omega \gets \omega-1$\;
}

\Return $\overline{T}$\;

\caption{Modified Tree}
\label{alg:ModifiedTree}
\end{algorithm}

Define {\ttfamily{AssignLevels}}$(p_i, H, i)$, where $H$ is a pair of end-nodes of $p_i$, to assign levels to all the nodes of $p_i$ by identifying maximal independent nodes (excluding the end nodes of $p_i$). Levels are assigned in the range $(i-1)$ to $0$. A modified path is connected to an alternate leader called \textit{pseudo-leader} by the function {\ttfamily{ModifyPath}}$(p_{\omega}, p_i)$ which chooses the nearest level-$(\omega+1)$ node on $p_i$ from the intersection point.

Consider that we are at some level $\omega$ where $0 \leq \omega \leq \kappa-1$ and suppose that there are several pruned paths in $P_{\omega}$. Let $p_\omega \in P_\omega$ be one such path and let $y \in p_i$ be the intersection point, where $i > \omega$. A \emph{pseudo-leader}, $v_{\omega+1}$, is chosen on $p_i$ using {\ttfamily{ModifyPath}} $(p_{\omega},p_i)$ (see appendix). Note that this may alter $I_i$ to $\overline{I_i}$ by replacing the original leader by the pseudo-leader. The path $p_\omega$ is extended from $y$ to $v_{\omega+1}$ and this new extended path $\overline{p}_{\omega}$ replaces $p_{\omega}$ in the modified tree $\overline{T}$. Once a new path $\overline{p}_{\omega}$ is established, all the nodes in it are assigned levels using ({\ttfamily{AssignLevels}}$(\overline{p}_{\omega}, H, \omega)$, where $H$ is the set of end-nodes of $\overline{p}_{\omega}$). This procedure of modifying pruned paths, replacing the old pruned paths by new, extended, \emph{modified} paths and assigning levels to all nodes in those paths is repeated for all levels down to $0$. The resulting tree is a modified tree with normal leaders and pseudo-leaders for respective types of paths.

\section{Analysis}
\label{analysis}

We will analyze the algorithm based on the data fusion algorithm in the modified tree $\overline{T}$. The fusion algorithm works in phases, where each phase consists of $\kappa$ rounds. In the beginning of each phase, several source nodes may have data to send. In the first round, each source node sends its data to its respective parent at level 1. In the second round, the leaders at level 1 send the fused data to their respective parent at level 2. In general, in round $i$, where $1 \leq i \leq \kappa$, the leaders at level $i-1$ send their data to their respective parent at level $i$ which then fuses the received data. At the end of round $\kappa$, the sink node $s$ would have fused all the received data.

The modified tree $\overline{T}$ naturally defines a hierarchical partition of $G$. For each node $u \in \overline{I_i}$, we define the respective cluster $Z_i^u$ to contain all the nodes in $V$ which appear in the subtree of $\overline{T}$ rooted at $u$ at level $i$. Node $u$ is the {\em leader} of $Z_i^u$. We denote by $Z_i = \bigcup_{u \in \overline{I_i}} Z_i^u$ the partition of $G$ at level $i$. Let $Z = \bigcup_{0 \leq i \leq \kappa} Z_i$ denote a {\em partition hierarchy} of $V$ with $\kappa$ levels. Note that $Z_\kappa$ consists of one cluster containing all the nodes in $V$. Clearly, each cluster $Z_i^u$ induces a connected subgraph in $G$. 

Let $A$ denote all the source nodes. We assume that each data item generated at a source node has size 1. Let $A_i^u = A \cap Z_i^u$. According to the data fusion algorithm, at the end of a round no later than $i$, the data items in $A_i^u$ will appear fused at node $u$ with size $f(|A_i^u|)$. Let $B_i^j$ denote the set of nodes at level $i$ which at the end of round $i$ hold data with size $f(|A_i^u|) \in [2^j,2^{j+1}-1]$, where $0 \leq j \leq \lambda$, and $\lambda = \lceil \log |A| \rceil = O(\log n)$. Let $B_0^0 = A$. Lemmas (\ref{theorem:upper-bound} and \ref{theorem:lower-simple}) establishes the lower-bound on communication cost at each round. 

A path $p_i^j$ could be intersected by multiple lower-level paths. Even though the leaders at a level $i$ are sufficiently far off, due to intersection by other paths, the leader at level $i$ might be close to many leaders of lower level paths. However, the number of such leaders that are close is \textit{limited}. Lemmas \ref{lemma:max-path-segments}, \ref{lemma:max-modified-paths} and \ref{lemma:max-pseudo-leaders} establishes the maximum number of pseudo-leaders in a given neighborhood. If all the nodes of a path $p_i$ belong to a cluster $Z_i^u$, we call such a path to be a \textit{total internal path}. Let $\delta = 3\cdot2^i$.

\begin{lemma}
\label{lemma:internal-path-len}
The maximum distance in $G$ between any node $v \in Z_i^u$ to $u$ is $\delta = 3\cdot2^i$ and there is a total internal path from any node $v \in Z_i^u$ to $u$ with respect to $Z_i^u$.
\end{lemma}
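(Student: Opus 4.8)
The plan is to treat the two assertions separately: the existence of a total internal path is structural and follows at once, while the radius bound $\delta = 3\cdot 2^{i}$ is obtained by cutting the $v$–$u$ path of $\overline T$ into one arc per level it climbs and summing a geometric series.

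For the internal path, I would take $\pi$ to be the unique $v$–$u$ path inside the tree $\overline T$. By the definition of the hierarchical partition, $Z_i^u$ is exactly the set of nodes whose lowest ancestor of level $\ge i$ in $\overline T$ (a node counting as an ancestor of itself) is $u$. Hence every vertex of $\pi$ other than $u$ lies strictly between $v$ and $u$, so it has level $<i$ and its lowest ancestor of level $\ge i$ is still $u$; therefore $\pi\subseteq Z_i^u$. Since every edge of $\overline T$ is an edge of $G$ (the tree is assembled entirely from shortest paths of $G$), $\pi$ is a path of $G$ all of whose vertices lie in $Z_i^u$, i.e.\ a total internal path from $v$ to $u$ with respect to $Z_i^u$. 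This also reduces the first assertion to showing $w(\pi)\le 3\cdot 2^i$, because $\dist_G(v,u)\le w(\pi)$.

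To bound $w(\pi)$, walk from $v$ up to $u$ and cut $\pi$ at $v=u_0,u_1,\dots,u_t=u$, where $u_{j+1}$ is the first vertex above $u_j$ on $\pi$ whose level is strictly larger than that of $u_j$. Then the levels of $u_0,\dots,u_t$ are strictly increasing, all $<i$ except the last, so the arcs of the decomposition use pairwise distinct levels $k\in\{0,1,\dots,i-1\}$. Each arc $u_j\rightsquigarrow u_{j+1}$ is either (I) the whole path $p_k$ of $\overline T$ whose lower endpoint is the level-$k$ leader $u_j$, or (II) a sub-arc of a single path of $\overline T$ on which $u_j$ sits, joining $u_j$ to the next node of larger level that \texttt{AssignLevels} placed on that path. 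A regular level-$k$ path has weight at most $2^{k+1}-1$: a level-$k$ leader not already in $I_{k+1}$ lies within $2^{k+1}-1$ of $I_{k+1}$ by maximality of $MIS(G,I_{k+2},2^{k+1})=I_{k+1}$. A modified path is a pruned path (still of weight at most $2^{k+1}-1$) extended to its pseudo-leader, and the extension is short because \texttt{ModifyPath} selects the \emph{nearest} level-$(k+1)$ node on the intersected path and that path was thinned by \texttt{AssignLevels} with geometric spacing; the same spacing bounds the case-(II) sub-arcs. Summing these bounds, each of order $2^k$, over the distinct values $k=0,\dots,i-1$ gives a geometric series, and tracking the constants (including the ``$-1$'' slack in the MIS radii) makes the sum at most $\delta=3\cdot 2^i$.

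The step I expect to be the real work is case (I) with a \emph{modified} path: pruning followed by re-attachment to a pseudo-leader, possibly iterated so that a reconstructed path concatenates several pruned segments, could in principle accumulate length, and one has to verify that the detour is still only a constant times $2^k$ and that the constants aggregate to exactly $3$ rather than something larger; this is where the precise distance parameters of \texttt{AssignLevels} and \texttt{ModifyPath} from the appendix must be used. A smaller nuisance is that the level function is not monotone along an individual path of $\overline T$ (each path carries its own internal sub-hierarchy), which is why the cut points $u_j$ must be defined by ``first ancestor of strictly larger level'' rather than by ``next path segment up.''
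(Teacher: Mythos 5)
Your overall strategy is the same as the paper's: the internality claim is handled by observing that $Z_i^u$ is by definition the vertex set of the subtree of $\overline T$ rooted at $u$, so the unique tree path from $v$ to $u$ stays inside the cluster; and the distance bound is obtained by decomposing that path into one (modified) segment per level $j=0,\dots,i-1$ and summing a geometric series. The paper's proof is exactly the sum $\sum_{j=0}^{i-1}(3\cdot 2^j-2)<3\cdot 2^i$, where the per-segment bound $3\cdot 2^j-2$ is supplied by the appendix Lemma~\ref{lemma:max-path-length}. Your version is, if anything, more careful about how the decomposition is defined (the cut points at strictly increasing levels, and the distinction between whole modified paths and sub-arcs created by \texttt{AssignLevels}), which the paper glosses over with ``in the worst case, a concatenation of several modified paths.''

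The one genuine gap is the step you yourself flag as ``the real work,'' and it is not merely a matter of bookkeeping: with the per-segment bounds you actually state --- pruned prefix of weight at most $2^{k+1}-1$ plus an extension to the pseudo-leader of at most $2^{k+1}-1$ --- each level contributes up to $2^{k+2}-2$, and the sum is about $4\cdot 2^i$, which overshoots the claimed $\delta=3\cdot 2^i$. The paper gets the constant $3$ only because Lemma~\ref{lemma:max-path-length} asserts the pruned prefix (from the level-$k$ leader to the intersection point) has length at most $2^k-1$, not $2^{k+1}-1$, so each segment is at most $(2^k-1)+(2^{k+1}-1)=3\cdot 2^k-2$. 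Since $\delta$ is propagated into Lemmas~\ref{theorem:upper-bound}, \ref{theorem:subsets} and \ref{theorem:lower-bound}, you cannot leave the constant as ``some constant times $2^k$'': you either need to justify the tighter $2^k-1$ bound on the pruned prefix (the paper's own one-line justification for it is itself rather terse) or accept a larger value of $\delta$ and carry it through the later lemmas.
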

\begin{proof}
	Consider a path $p_{v,u} \in Z_i^u$. In the worst case, this path could be a concatenation of several modified paths, ranging from level $0$ to $i-1$. The total length of $p_{v,u}$ would be equal to the sum of maximum lengths of each of those segments: $\sum_{j=0}^{i-1}(3\cdot2^j - 2) = 3\cdot2^i-2i-1 < 3\cdot2^i$. 

By construction, any cluster $Z_i^u$ will contain those nodes of $V$ that appear in the subtree of $\overline{T}$ rooted at $u$ at level $i$. The path $\overline{p_i} \in \overline{T}$ from any of the member nodes $v \in Z_i^u$ to $u$ contains \textit{only} nodes that lie within the cluster $Z_i^u$. Also, $\overline{p_i}$ translates to a path $p \in T$. 
\end{proof}

\begin{lemma}
\label{lemma:max-pseudo-leaders}
	The total number of pseudo-leaders at level $i$, where $i \leq r$, which are inside $N_G(x,2^r)$ is at most $2^{2\rho(r-i+3)}\cdot(\kappa-i+1)^2$.
\end{lemma}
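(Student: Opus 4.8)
\noindent\textit{Proof idea.} The plan is to set up a bounded-multiplicity charging from the level-$i$ pseudo-leaders to the modified paths of level $i-1$, and then to count those paths by a doubling-dimension packing argument. Inspecting \texttt{ModifyPath}, a level-$i$ pseudo-leader is produced only when a pruned path $p_{i-1}\in P^{pr}_{i-1}$ is re-routed onto some intersecting \emph{host} path $p_j$ of higher level $j$ (with $i+1\le j\le\kappa-1$, since \texttt{AssignLevels} places level-$i$ internal nodes only on paths of level at least $i+1$), and the pseudo-leader is the unique new end-node of the resulting modified path $\overline p_{i-1}$. Distinct level-$i$ pseudo-leaders therefore lie on distinct level-$(i-1)$ modified paths -- even a modified path that concatenates several pruned pieces still acquires exactly one new end-node -- so it suffices to bound the number of level-$(i-1)$ modified paths whose pseudo-leader lies in $N_G(x,2^r)$.

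Next I would localise such a path. A level-$(i-1)$ modified path is a concatenation of level-$0,\dots,(i-1)$ segments together with the short hop on $p_j$ from the intersection point to the nearest level-$i$ node; by Lemma~\ref{lemma:internal-path-len} (and the fact that the level-$i$ nodes assigned by \texttt{AssignLevels} along any path form a maximal $2^i$-independent set, hence are $O(2^i)$ apart) its diameter is $O(2^i)\le 2^r$, using $i\le r$. Hence if its pseudo-leader lies in $N_G(x,2^r)$, the whole modified path, in particular its starting leader in $I_{i-1}$, lies in $N_G(x,2^{r+1})$. Counting level-$(i-1)$ modified paths inside that ball is exactly the estimate supplied by Lemma~\ref{lemma:max-modified-paths}; substituting its parameters (level $i-1$ at radius $2^{r+1}$) and simplifying with $i\le r\le\kappa$ yields the claimed $2^{2\rho(r-i+3)}\cdot(\kappa-i+1)^2$.

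For completeness I would indicate (inside the proof of Lemma~\ref{lemma:max-modified-paths}, or directly here) why the bound has this shape. One $(\kappa-i+1)$ is the sum over the $O(\kappa-i)$ admissible host levels $j$. For fixed $j$, the host path $p_j$ starts at a leader of the $2^j$-separated set $I_j$ and has length $O(2^j)$, so by the Doubling-Dimension Graph property only $2^{O(\rho(|r-j|+1))}=O(2^{\rho(r-i+3)})$ such host paths can meet the neighbourhood (a second $(\kappa-i+1)$ enters because a host path may itself be a modified path routed through up to $\kappa-j$ extra levels), and each such host path contributes at most $O(2^{r-i})\le 2^{\rho(r-i)}$ level-$i$ nodes in the ball, since its intersection with the radius-$2^r$ ball is a shortest sub-path of length $O(2^r)$ carrying an $\Omega(2^i)$-separated set of level-$i$ nodes. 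Multiplying the three bounds and folding the additive constants into the exponent via $i\le r\le\kappa$ gives the stated expression.

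The step I expect to be the main obstacle is the localisation above: one must show \texttt{ModifyPath} never displaces an end-node by more than $O(2^i)$, i.e.\ that the nearest level-$i$ node on the host path to the intersection point is within $O(2^i)$ of it \emph{along the path}, which rests on maximality of the level-$i$ independent set chosen on that path together with Lemma~\ref{lemma:internal-path-len}; and one must check that the charging remains injective when a host path is itself a modified path (``$p_j$ may be a modified path itself''), so that no two level-$i$ pseudo-leaders are traced back to the same $\overline p_{i-1}$. Granting these structural facts, the remainder is the routine doubling packing already packaged in Lemmas~\ref{lemma:max-path-segments} and~\ref{lemma:max-modified-paths}.
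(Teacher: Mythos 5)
Your proposal is correct and follows essentially the same route as the paper: it identifies each level-$i$ pseudo-leader with the (unique) end-node of a level-$(i-1)$ modified path, and then bounds the number of such modified paths meeting $N_G(x,2^r)$ by multiplying the packing bound on host path segments (Lemma~\ref{lemma:max-path-segments}) with the per-segment bound on modified paths (Lemma~\ref{lemma:max-modified-paths}). The extra localisation to $N_G(x,2^{r+1})$ and the explicit injectivity discussion are refinements the paper leaves implicit, but they do not change the argument.
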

\begin{proof}
	From Lemma \ref{lemma:max-path-segments}, there are $2^{\rho(r-i+3)} \cdot (\kappa-i+1)$ path segments $p_{i+j} \in T$, $j \geq 0$, crossing $N(x,2^r)$. From Lemma \ref{lemma:max-modified-paths}, each such path segment can have multiple modified path segments at level $i$ or higher passing through it $(\leq 2^{\rho(r-i+1)} \cdot (\kappa-i+1))$, the total number of modified path segments that cross $N(x, 2^r)$ would be atmost $2^{2\rho(r-i+3)}\cdot(\kappa-i+1)^2$. This gives also an upper bound to the number of pseudo-leaders at level $i$ or higher.
\end{proof}

\begin{lemma}
\label{theorem:upper-bound}
The communication cost of the data fusion algorithm at round $i$ is at most $|B_{i-1}^j| 6 \cdot 2^{i+j}$, for any $1 \leq i \leq \kappa$.
\end{lemma}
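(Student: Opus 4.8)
The plan is to unfold what round $i$ of the fusion algorithm actually does and charge the cost flow-by-flow. At the start of round $i$ every leader $v\in\overline{I_{i-1}}$ holds the fused packet of size $f(|A_{i-1}^v|)$, and the only communication in round $i$ is that each such $v$ forwards this packet, unchanged, along the tree path in $\overline T$ to the leader $u$ of the cluster $Z_i^u$ containing $v$ (its ``parent at level $i$''). Hence the total cost of round $i$ is at most $\sum_{v\in\overline{I_{i-1}}} f(|A_{i-1}^v|)\cdot \mathrm{len}(v\to u)$, where $\mathrm{len}(v\to u)$ is the length of that path; this is an upper bound because charging every flow its full path length only over-counts the traffic on edges shared by several paths heading to a common parent. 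Grouping the level-$(i-1)$ leaders by the size class of the data they carry, it suffices to bound the contribution of the leaders in $B_{i-1}^j$ and then (implicitly) sum over $j$.

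I would then supply the two numeric ingredients. First, by definition of $B_{i-1}^j$ every $v$ in that set holds data of size $f(|A_{i-1}^v|)\le 2^{j+1}-1 < 2^{j+1}$. Second, by Lemma \ref{lemma:internal-path-len} the path from any node $v\in Z_i^u$ to $u$ is a total internal path of length at most $\delta = 3\cdot 2^i$; in fact, decomposing it as in the proof of that lemma into modified segments of levels $0,1,\dots,i-1$ gives length $\sum_{j=0}^{i-1}(3\cdot 2^j-2)<3\cdot 2^{i}$. Multiplying, each leader in $B_{i-1}^j$ generates at most $2^{j+1}\cdot 3\cdot 2^{i} = 6\cdot 2^{i+j}$ units of communication in round $i$, and summing over the $|B_{i-1}^j|$ such leaders yields the claimed bound $|B_{i-1}^j|\,6\cdot 2^{i+j}$ (and $\sum_{j=0}^{\lambda}|B_{i-1}^j|\,6\cdot 2^{i+j}$ for the whole round).

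The main obstacle is the path-length estimate. A naive bound of $2^i-1$ would hold for a single hierarchy edge, but after the pruning in Algorithm \ref{alg:SpanningTree} and the re-attachment to pseudo-leaders in Algorithm \ref{alg:ModifiedTree} the $v$-to-$u$ link in $\overline T$ may have been spliced from several lower-level modified segments, so one has to argue that no segment of level $\ge i$ can appear on this path (exactly the ``total internal path with respect to $Z_i^u$'' property of Lemma \ref{lemma:internal-path-len}) and that the telescoping sum $\sum_{j=0}^{i-1}(3\cdot 2^j-2)$ dominates every admissible concatenation. Once that structural fact is in hand, the remainder is the one-line product of packet size and path length together with the sum over $B_{i-1}^j$, and the over-counting remark makes the inequality direction safe.
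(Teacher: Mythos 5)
Your proposal is correct and matches the paper's proof: both bound the per-leader cost by the product of the data size (at most $2^{j+1}-1$, by definition of $B_{i-1}^j$) and the path length to the level-$i$ parent (at most $\delta = 3\cdot 2^i$, by Lemma \ref{lemma:internal-path-len}), then multiply by $|B_{i-1}^j|$. The extra structural discussion you give about why the spliced path stays within the cluster is exactly what Lemma \ref{lemma:internal-path-len} is invoked for, so nothing essentially new is added or missing.
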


\begin{proof}
Let $v$ be a leader in $\overline{I}_{i-1}$ and suppose that $v \in B_{i-1}^j$. The cost of sending data from $v$ to its parent at level $i$ is at most $\delta\cdot(2^{j+1}-1) \leq 3 \cdot 2^{i + j + 1}$, which is obtained by the product of the total path length connecting $v$ to its parent (bounded by $\delta$), from Lemma \ref{lemma:internal-path-len}, to the size of the data (bounded by $2^{j+1}-1$). Therefore, the total cost at round $r$ for communicating data in set $B_{i-1}^j$ is at most $|B_{i-1}^j| 6 \cdot 2^{i + j}$.
\end{proof}

We proceed with estimating a lower bound on the communication cost at each round $k$. Let $C^*(A,G)$ denote the optimal communication cost for $A$ data sources in graph $G$. We start with a simple lower bound.

\begin{lemma}
\label{theorem:lower-simple}
If a message is sent in round $i$ from a node $v \in B_{i-1}^j$ then $C^*(A,G) > \max(2^{i +j -1},1)$.
\end{lemma}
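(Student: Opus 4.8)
The plan is to extract a lower bound on the optimal cost directly from the existence of a node $v \in B_{i-1}^j$ that successfully sends a message in round $i$. The key observation is that, by definition of $B_{i-1}^j$, the fused data held at $v$ at the end of round $i-1$ has size $f(|A_{i-1}^v|) \geq 2^j$, and in particular $|A_{i-1}^v| \geq 1$, so there is at least one source $a \in A$ lying in the cluster $Z_{i-1}^v$. I would first handle the degenerate case: since any message carries at least one unit of data and the optimal solution must route every source's data to the sink along some path of length at least $1$ (the edge weights are positive integers), we trivially have $C^*(A,G) \geq 1$, which covers the ``$\max(\cdot,1)$'' branch.

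For the main bound, I would argue as follows. The $2^j \leq f(|A_{i-1}^v|)$ worth of fused data at $v$ came from the $|A_{i-1}^v|$ sources inside $Z_{i-1}^v$, and concavity together with $f(0)=0$ and $f$ non-decreasing gives $f(x) \leq x$ for integer $x \geq 1$ is \emph{not} what we want here — rather, we need a lower bound on how much the optimal solution must pay to handle these sources. The right handle is the \emph{diameter of the cluster}: by Lemma~\ref{lemma:internal-path-len}, every node of $Z_{i-1}^v$ lies within distance $3 \cdot 2^{i-1}$ of $v$ in $G$, but more usefully the cluster is ``spread out'' in the sense that it was built around an MIS at scale $2^{i-1}$. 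I would show that the optimal solution, in order to aggregate or distribute the data of the $|A_{i-1}^v|$ sources in $Z_{i-1}^v$, must transmit, summed over its edges, a total of at least roughly $f(|A_{i-1}^v|)$ units of flow across some cut separating (most of) the cluster, each such unit travelling distance $\Omega(2^{i-1})$; combining with $f(|A_{i-1}^v|) \geq 2^j$ yields $C^*(A,G) \geq c \cdot 2^{i-1} \cdot 2^j = c \cdot 2^{i+j-1}$ for an appropriate constant, and tracking constants carefully should give the stated strict inequality $C^*(A,G) > 2^{i+j-1}$.

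The step I expect to be the main obstacle is making precise the claim that the optimal solution must pay $\Omega(2^{i-1})$ per unit of fused data originating in $Z_{i-1}^v$. The subtlety is that the optimum is not constrained to use our tree, so it could in principle route through nodes outside the cluster; I would control this by using the fact that $v$ is (or descends from) a level-$(i-1)$ leader from $MIS(G, I_i, 2^{i-1})$, so any \emph{other} leader at the same or higher level is at distance $\geq 2^{i-1}$, forcing any source in $Z_{i-1}^v$ that is not co-located with the sink to be at distance $\geq$ some constant fraction of $2^{i-1}$ from $s$ — and the optimal cost is at least the cost of routing a single unit from such a source to $s$. Even the crude bound ``one source at distance $\geq 2^{i-1}$ from $s$ contributes $\geq 2^{i-1}$ to $C^*$, and the $2^j$ fused units force at least $2^j$ such contributions once we also invoke $f(x) \leq x$'' should suffice; I would present it in that streamlined form rather than through a full cut argument, since the desired bound is only $2^{i+j-1}$, which loses a factor of $2$ against the ``obvious'' $2^i \cdot 2^j$ and thus leaves room for slack.
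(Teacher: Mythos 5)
There is a genuine gap, and it sits exactly at the step you flagged as the main obstacle. Your final ``streamlined'' bound --- that $f(|A_{i-1}^v|)\ge 2^j$ together with $f(x)\le x$ forces ``at least $2^j$ such contributions'' of $2^{i-1}$ each --- is not a valid lower bound on $C^*(A,G)$: the whole point of a concave fusion cost is that sources share edges, so $2^j$ sources each at distance $2^{i-1}$ from $s$ can be served at total cost roughly $f(2^j)\cdot 2^{i-1}$ plus a gathering cost, not $2^j\cdot 2^{i-1}$. The correct version is the cut argument you sketched and then explicitly set aside: by subadditivity of $f$ (which follows from concavity and $f(0)=0$), at least $f(|\Gamma|)$ units of fused data must cross every sphere of radius $d\le 2^{i-1}$ centered at the \emph{sink} $s$, where $\Gamma$ is the set of sources at distance greater than $2^{i-1}$ from $s$; summing over the $2^{i-1}$ such cuts gives $C^*(A,G) > f(|\Gamma|)\cdot 2^{i-1} \ge 2^{i+j-1}$. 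This is what the paper does. Note the moat must be taken around $s$, not around the cluster $Z_{i-1}^v$; the cluster's ``spread'' at scale $2^{i-1}$ is a red herring.

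The second gap is your justification that the sources feeding $v$'s round-$i$ message are far from $s$. Leader separation only gives $\dist_G(v,s)\ge 2^{i-1}$, while a source $a\in Z_{i-1}^v$ can be as far as $3\cdot 2^{i-1}$ from $v$ (Lemma~\ref{lemma:internal-path-len}), so the triangle inequality yields no lower bound at all on $\dist_G(a,s)$, not even a constant fraction of $2^{i-1}$. The fact actually needed, and the one the paper invokes, is a property of the construction rather than of the metric: since $s$ is always preferred as a leader, every node within distance $2^{i-1}$ of $s$ lies in the subtree of $\overline{T}$ rooted at $s$ at level $i-1$, so its data is fully fused at $s$ by the end of round $i-1$; hence any message sent at round $i>1$ carries data only from sources in $\Gamma$, and its size being at least $2^j$ forces $f(|\Gamma|)\ge 2^j$. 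With these two repairs your argument collapses into essentially the paper's proof; your handling of the $\max(\cdot,1)$ branch is fine.
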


\begin{proof}
Since $v \in B_{i-1}^j$, the message sent by $v$ during round $i$
has size at least $2^{j}$. If $i = 1$ then the lower bound $1$ of the claim follows trivially. Suppose now that a message is sent during round $i > 1$. Let $\Gamma \subseteq A$ be the set of source nodes such that for every $x \in \Gamma$, $\dist_{G}(x,s) > 2^{i-1}$. We will show that $f(|\Gamma|) \geq 2^{j}$. Assume for contradiction that $f(|\Gamma|) < 2^j$. From the construction of $\overline{T}$ it can be verified that every node within distance $2^{i-1}$ from $s$ is a member of the subtree of $\overline{T}$ rooted at $s$ at level $i-1$, since $s$ is always preferred as a leader choice; these are the nodes of set $A \setminus \Gamma$. Thus, the last round of the algorithm in which a message is sent
carrying information from data sources in $A \setminus \Gamma$ is $i-1$.
Therefore, the message which is sent at round $i$ must carry information only from the data sources in $\Gamma$. Therefore, $f(|\Gamma|) \geq 2^{j}$. The smallest cost to transfer the data from the $\Gamma$ data sources to the sink $s$ is more than $f(|\Gamma|) 2^{i-1} \geq 2^j 2^{i-1} = 2^{i+j-1}$.
\end{proof}


We continue with an alternative lower bound. We construct a new family of multi-graphs by contracting the clusters of the partition hierarchy $Z$. Let $\G_i = (\V_i, \E_i, \w)$ be complete graph that we obtain when we contract the edges in the clusters of $Z_i$ and the whole cluster $Z_i^u$ is replaced by $u$. Each edge $e = (x,y) \in E$ is replaced by edge $e' = (x',y') \in \E_i$ where $x'$ and $y'$ are the respective leaders of $x$ and $y$ at partition $Z_i$ and $\w(x',y')= dist_G(x',y')$.

\begin{lemma}
\label{theorem:subsets}
Given an arbitrary set of nodes $X \subseteq \V_i$, where $0 \leq i \leq \kappa$, there is a subset $X' \subseteq X$ such that $|X'| \geq |X|/(2^{10\rho} \cdot (\kappa-i+1)^2 + 1)$, and for each pair of distinct $x,y \in X'$, $dist_{\G_i}(x,y) > 3\cdot\delta$.
\end{lemma}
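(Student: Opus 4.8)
The plan is to obtain $X'$ by a greedy ``net'' construction in the contracted metric $\dist_{\G_i}$ and to charge the discarded nodes to a local packing bound. Initialise $Y:=X$ and $X':=\emptyset$; while $Y\neq\emptyset$, pick an arbitrary $x\in Y$, move it to $X'$, and delete from $Y$ every $y$ with $\dist_{\G_i}(x,y)\le 3\delta$ (in particular $x$ itself). By construction any two distinct elements of $X'$ lie at $\dist_{\G_i}$-distance $>3\delta$, so only the cardinality bound needs work. It therefore suffices to show that every $\dist_{\G_i}$-ball of radius $3\delta$ centred at a vertex of $\V_i$ contains at most $C:=2^{10\rho}(\kappa-i+1)^2+1$ vertices of $\V_i$ (counting the centre): then each iteration removes at most $C$ elements of $Y$, hence $|X|\le C\,|X'|$, which is exactly the claimed inequality. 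So the whole lemma reduces to this packing bound, and since $X\subseteq\V_i$ it is enough to count leaders.

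\textbf{Passing to the original metric.} The contracted distances dominate the original ones: for $u,v\in\V_i$ one has $\dist_{\G_i}(u,v)\ge \dist_G(u,v)$. Indeed, every edge of $\G_i$ joins two leaders $a,b$ and has weight $\dist_G(a,b)$ by the definition of $\G_i$, so a shortest $u$--$v$ path $u=b_0,b_1,\dots,b_m=v$ in $\G_i$ has length $\sum_{t}\dist_G(b_t,b_{t+1})\ge \dist_G(u,v)$ by the triangle inequality in $G$. Hence $\{v\in\V_i:\dist_{\G_i}(u,v)\le 3\delta\}\subseteq \overline{I_i}\cap N_G(u,3\delta)$, and since $3\delta=9\cdot 2^{i}\le 2^{\,i+4}$ this set is contained in $\overline{I_i}\cap N_G(u,2^{\,i+4})$. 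It remains to bound the number of level-$i$ leaders of $\overline{T}$ lying in the $G$-neighbourhood $N_G(u,2^{\,i+4})$.

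\textbf{Counting leaders in $N_G(u,2^{i+4})$.} I would split $\overline{I_i}$ into the surviving regular leaders and the pseudo-leaders. The regular leaders are pairwise at $\dist_G$-distance at least $2^i$ by the $\mathrm{MIS}$ step of Algorithm~\ref{alg:SpanningTree}; covering $N_G(u,2^{\,i+4})$ by $2^{O(\rho)}$ neighbourhoods of radius strictly less than $2^{\,i-1}$ via the doubling observation, each such small neighbourhood contains at most one regular leader, so there are at most $2^{O(\rho)}$ of them. The pseudo-leaders at level $i$ inside $N_G(u,2^{\,i+4})$ are bounded directly by Lemma~\ref{lemma:max-pseudo-leaders} applied with $r=i+4$, giving at most $2^{O(\rho)}(\kappa-i+1)^2$. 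Adding the two counts and the centre $u$, and absorbing the $2^{O(\rho)}$ factors into the exponent, gives the required bound $2^{10\rho}(\kappa-i+1)^2+1$, and the lemma follows.

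\textbf{Main obstacle.} The delicate point is not the greedy argument but the local density of leaders after contraction: a priori, collapsing clusters of $G$ could pile up arbitrarily many leaders inside a small $\dist_{\G_i}$-ball. This is exactly what the pseudo-leader estimate of Lemma~\ref{lemma:max-pseudo-leaders} controls (which in turn rests on Lemmas~\ref{lemma:max-path-segments} and~\ref{lemma:max-modified-paths}), so the crux is to invoke it with the correct radius and to check the two easily overlooked facts that $X\subseteq\V_i$ (so only leaders must be counted) and that $\dist_{\G_i}$ dominates $\dist_G$ (so the doubling property of $G$ applies). Pinning down the precise constant $2^{10\rho}$ is then just bookkeeping with those three lemmas together with the doubling observation.
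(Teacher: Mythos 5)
Your proposal is correct and follows essentially the same route as the paper: both reduce the lemma to a local packing bound on the number of leaders within $\dist_{\G_i}$-distance $3\delta$ of a given leader (using that $\dist_{\G_i}$ dominates $\dist_G$ and invoking Lemma~\ref{lemma:max-pseudo-leaders}), and then extract $X'$ of size $|X|/(m+1)$ --- you by greedy deletion of $3\delta$-balls, the paper by an $(m+1)$-colouring of the auxiliary graph on $X$ whose edges join ``close'' pairs, which are interchangeable arguments. The only caveat is bookkeeping: with your (correct) radius $r=i+4$ (since $3\delta = 9\cdot 2^i \le 2^{i+4}$) Lemma~\ref{lemma:max-pseudo-leaders} yields $2^{2\rho\cdot 7}=2^{14\rho}$, which cannot simply be ``absorbed'' into the stated $2^{10\rho}$; the paper reaches $2^{10\rho}$ only by plugging in $r=i+2$, i.e.\ radius $4\cdot 2^i < 3\delta$, so the precise constant is already inconsistent in the source and your derivation is, if anything, the more honest one.
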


\begin{proof}
Construct a new unweighted graph $H = (X, E_H)$ such that $(x,y) \in E_H$ if $\dist_{\G_i}(x,y) \leq 3\cdot\delta$. Let $x \in X$. From Lemma \ref{lemma:max-pseudo-leaders}, the number of pseudo-leaders of level $i$ which are in $N_G(x,3\delta)$ is bounded by $m = 2^{2\rho(i+2-i+3)}\cdot(k-i+1)^2 = 2^{2\rho(5)}\cdot(k-i+1)^2 = 2^{10\rho} \cdot (\kappa-i+1)^2$. By construction of $\G_i$, $\dist_{\G_i}(x,y) \leq 3\delta$ only if $y \in N_G(x,3\delta)$. Therefore, the degree of $H$ is bounded by $m$. Hence, $H$ accepts a $m+1$ coloring which implies that there is an independent set $X' \subseteq X$ in $H$ of size $|X'| \geq |X|/(2^{10\rho} \cdot (\kappa-i+1)^2 + 1)$ . Clearly, for each $x,y \in X'$, $dist_{\G_i}(x,y) > 3\cdot\delta$.
\end{proof}

\begin{lemma}
\label{theorem:lower-bound}
$C^*(A,G) \geq 2^{i+j-1} \{|B_i^j|/[2^{10\rho} \cdot (\kappa-i+1)^2 + 1] - 1\}$, for every $i$, $0 \leq i \leq \kappa$., $0 \leq j \leq \lambda$.
\end{lemma}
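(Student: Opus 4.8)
The plan is to derive this lower bound by combining the simple per-message lower bound of Lemma \ref{theorem:lower-simple} with the "sparsification" result of Lemma \ref{theorem:subsets}, applied in the contracted multi-graph $\G_i$. The intuition is that $B_i^j$ is a set of level-$i$ leaders, each holding fused data of size at least $2^j$ at the end of round $i$; by Lemma \ref{theorem:lower-simple} each such leader, when it forwards its data in round $i+1$, certifies that $C^*(A,G) > 2^{(i+1)+j-1} = 2^{i+j}$ — but this does not immediately sum over all of $B_i^j$, because the witnessing source sets $\Gamma$ for different leaders may overlap heavily. The point of passing to $\G_i$ and invoking Lemma \ref{theorem:subsets} is to extract a large sub-collection of leaders that are pairwise far apart (distance $> 3\delta$ in $\G_i$), so that their witnessing data (or the cost of routing it) can be charged to disjoint portions of any feasible solution.

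First I would set $X = B_i^j \subseteq \V_i$ (identifying each level-$i$ leader with its vertex in the contracted graph) and apply Lemma \ref{theorem:subsets} to obtain $X' \subseteq B_i^j$ with $|X'| \geq |B_i^j|/(2^{10\rho}(\kappa-i+1)^2+1)$ and $\dist_{\G_i}(x,y) > 3\delta$ for all distinct $x,y \in X'$. Next I would argue that distinct leaders $x,y \in X'$ have disjoint witnessing source sets: since $\dist_{\G_i}(x,y) > 3\delta$ and each cluster $Z_i^u$ has radius at most $\delta$ in $G$ by Lemma \ref{lemma:internal-path-len}, the clusters associated with the leaders in $X'$ are mutually well-separated in $G$, so the optimal routing trees (or data paths) serving the data fused at distinct leaders of $X'$ use essentially disjoint edge sets in $G$. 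Then, for each $x \in X'$ except possibly one (the one whose cluster contains $s$, which is why the "$-1$" appears), the data of size $\geq 2^j$ fused at $x$ must be transported a distance $> 2^{i+j-1}/2^j = 2^{i-1}$... more precisely, reusing the argument of Lemma \ref{theorem:lower-simple} at level $i$ rather than $i-1$: any leader in $X'$ not containing $s$ lies at distance $> 2^{i}$ from $s$ in $G$ (since it is $2^i$-independent from the $s$-containing structures), so routing its $\geq 2^j$ units of data to $s$ costs more than $2^{j}\cdot 2^{i-1} = 2^{i+j-1}$ in $G$, and these costs are over disjoint edges. Summing over the $\geq |X'|-1$ relevant leaders yields
\[
C^*(A,G) \;>\; 2^{i+j-1}\bigl(|X'|-1\bigr) \;\geq\; 2^{i+j-1}\Bigl\{\,|B_i^j|/[2^{10\rho}(\kappa-i+1)^2+1] - 1\,\Bigr\}.
\]

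The main obstacle I anticipate is making rigorous the claim that the optimal solution pays for each far-apart leader \emph{separately}. The subtlety is that $C^*(A,G)$ is the optimum for a \emph{single} fusion instance, not a sum of independent instances, so one must argue that an optimal solution's edge usage, when restricted to the vicinity of each cluster in $X'$, already accounts for moving that cluster's fused data outward — and that these vicinities are edge-disjoint because of the $> 3\delta$ separation (the factor $3\delta$ rather than $2\delta$ giving slack for the paths themselves). A clean way to handle this is: for each $x \in X'$ whose cluster avoids $s$, let $\Gamma_x \subseteq A$ be the sources inside $Z_i^x$; the separation guarantees the $\Gamma_x$ are disjoint and that any feasible solution must, for each $x$, send the fused quantity $f(|\Gamma_x|) \geq 2^j$ across the "boundary annulus" of $Z_i^x$ toward $s$, incurring cost $> 2^{i+j-1}$ on edges lying within distance $\delta$ of $Z_i^x$; since these annuli are pairwise disjoint, the costs add. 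One should double-check the edge case when $|B_i^j| < 2^{10\rho}(\kappa-i+1)^2+1$, where the bracketed quantity is negative and the inequality is vacuous, and the edge case $i = 0$ or $i = \kappa$, where the statement degenerates but remains consistent with Lemmas \ref{theorem:lower-simple} and \ref{theorem:upper-bound}.
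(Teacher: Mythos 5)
Your proposal matches the paper's proof in its essential structure: both apply Lemma \ref{theorem:subsets} to $B_i^j$ (viewed as a subset of $\V_i$) to extract a well-separated subset $X'$ with $|X'| \geq |B_i^j|/[2^{10\rho}(\kappa-i+1)^2+1]$, and then charge each of $|X'|-1$ leaders a transport cost of $2^{i+j-1}$ for its $\geq 2^j$ units of fused data. The only difference is the final charging step — the paper lower-bounds the weight of a Steiner tree on $X'$ in the contracted graph $\G_i$ by $\delta(|X'|-1)$ and multiplies by $2^j$, whereas you charge pairwise-disjoint moats around the separated clusters in $G$ — and both yield the stated bound.
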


\begin{proof}
Consider a particular $B_i^j$. Let $\mu = |B_i^j|$ and $B_i^j = \{ u_1, \ldots, u_{\mu} \}$. We have that $A_i^{u_q}$ data sources are in $Z_i^{u_q}$ and $f(|A_i^{u_q}|) \in [2^j, 2^{j+1}-1]$. Let $X = \bigcup_{q = 1}^{\mu} A_i^{u_q}$. Let $C^*(X,G)$ denote the optimal
cost of fusing the data from the $X$ data sources to the sink node $s$ in graph $G$.

We transform the data fusion problem with the $X$ sources to a new data fusion problem in $\G_i$ where all data sources in $A_i^{u_q}$ are relocated to the respective leader node $u_q$.
The size of the data held at the leader node is $f(|A_i^{u_q}|)$. Let $\X = B_i^j$. We have that the optimal cost of fusing the data from $\X$ in $\G$ is no larger than the cost of fusing the data from $X$ in $G$, namely, $C^*(\X, \G_i) \leq C^*(X, G)$, since graph $\G_i$ is obtained from $G$ by contracting edges, and thus the optimal communication cost of $G$ cannot get worse in $\G$.

From Lemma \ref{theorem:subsets}, there is a subset $\X' \subseteq \X$, such that $|\X'| \geq |\X|/[2^{10\rho} \cdot (\kappa-i+1)^2 + 1]$, and for each distinct $x,y \in \X'$, $\dist_{\G}(x,y) > 3\delta$. Let $x,y \in \X'$ be any two nodes $u \in Z_i^x$, $v \in Z_i^y$ and $\dist_G(x,y) \geq \delta$. The cost of fusing the data stored in $\X'$ to the sink $s$ is at least the weight of the Steiner tree $T_S$ in $G_i$ that connects the set of nodes $\X'$ (including $s$ in the Steiner tree would only increase its cost) multiplied by $2^{j}$ which is a lower bound on the size of the data from each source in $\X'$. The total weight of $T_S$ is
$w(T_S) > \delta (|\X'|-1) \geq \delta \{|B_i^j|/[2^{10\rho} \cdot (\kappa-i+1)^2 + 1] - 1\}$.
The claim follows, since $w(T_S) \cdot 2^{j} \leq C^*(\X', \G_i) \leq C^*(\X, \G_i) \leq C^*(X, G) \leq C^*(A, G)$.
\end{proof}

\begin{theorem}[Approximation of Spanning Tree]
\label{theorem:fusion-approx}
Tree $\overline{T}$ is $\sigma$-OST where $\sigma = O(2^{10\rho} \log^3 D \cdot \log n)$.
\end{theorem}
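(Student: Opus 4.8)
The plan is to compare the communication cost $\mathrm{ALG}$ incurred by the data‑fusion algorithm when it runs on $\overline{T}$ against the optimal cost $C^{*}(A,G)$, for an \emph{arbitrary} source set $A$ and an \emph{arbitrary} canonical fusion function $f$, and to show $\mathrm{ALG}\le\sigma\cdot C^{*}(A,G)$ with $\sigma=O(2^{10\rho}\log^{3}D\cdot\log n)$. Since $A$ and $f$ are arbitrary, this is precisely the $\sigma$‑OST guarantee. The degenerate instances ($A=\emptyset$ or $A=\{s\}$, where $\mathrm{ALG}=C^{*}=0$) are dismissed at the outset.

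First I would express $\mathrm{ALG}$ as a double sum over rounds and size classes. By Lemma~\ref{theorem:upper-bound} the cost charged in round $i$ to the leaders of $B_{i-1}^{j}$ is at most $|B_{i-1}^{j}|\,6\cdot 2^{i+j}$, and since the classes $B_{i-1}^{0},\dots,B_{i-1}^{\lambda}$ partition the data‑holding leaders of level $i-1$, we get $\mathrm{ALG}\le\sum_{i=1}^{\kappa}\sum_{j=0}^{\lambda}|B_{i-1}^{j}|\,6\cdot 2^{i+j}$, a sum with $\kappa(\lambda+1)=O(\log D\cdot\log n)$ terms. It therefore suffices to bound each individual term $|B_{i-1}^{j}|\,2^{i+j}$ by $O\!\big(2^{10\rho}(\kappa-i+2)^{2}\big)\cdot C^{*}(A,G)$.

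Fix $i,j$ with $B_{i-1}^{j}\neq\emptyset$ and put $\theta_{i}:=2^{10\rho}(\kappa-i+2)^{2}+1$; this is exactly the denominator Lemma~\ref{theorem:lower-bound} produces when its index is set to $i-1$, since $\kappa-(i-1)+1=\kappa-i+2$. I then split into two regimes. If $|B_{i-1}^{j}|\le 2\theta_{i}$ (``few heavy leaders''), then, because a message of size at least $2^{j}$ is sent in round $i$, Lemma~\ref{theorem:lower-simple} gives $C^{*}(A,G)>\max(2^{i+j-1},1)$, whence $|B_{i-1}^{j}|\,6\cdot 2^{i+j}\le 12\,\theta_{i}\,2^{i+j}=24\,\theta_{i}\,2^{i+j-1}\le 24\,\theta_{i}\,C^{*}(A,G)$. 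If instead $|B_{i-1}^{j}|>2\theta_{i}$ (``many heavy leaders''), then Lemma~\ref{theorem:lower-bound} applied with index $i-1$ yields $C^{*}(A,G)\ge 2^{i+j-2}\big(|B_{i-1}^{j}|/\theta_{i}-1\big)>2^{i+j-3}\,|B_{i-1}^{j}|/\theta_{i}$, so $|B_{i-1}^{j}|\,6\cdot 2^{i+j}<48\,\theta_{i}\,C^{*}(A,G)$. In both regimes the term is $O(\theta_{i})\cdot C^{*}(A,G)$, and the two regimes together cover every $(i,j)$.

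Summing over all $\kappa(\lambda+1)$ terms, $\mathrm{ALG}=O(1)\cdot C^{*}(A,G)\cdot(\lambda+1)\sum_{i=1}^{\kappa}\theta_{i}$. Since $\kappa-i+2\le\kappa+1=O(\log D)$ we have $\theta_{i}=O(2^{10\rho}\log^{2}D)$ for every $i$, so $\sum_{i=1}^{\kappa}\theta_{i}=O(2^{10\rho}\log^{3}D)$, and with $\lambda+1=O(\log n)$ this gives $\mathrm{ALG}=O(2^{10\rho}\log^{3}D\cdot\log n)\cdot C^{*}(A,G)$, as claimed. I expect the only real subtlety — and the place where care is needed — to be the index bookkeeping: round $i$ transports data from level $i-1$ to level $i$, so both lower‑bound lemmas must be invoked at level $i-1$ (not $i$), and one must verify that the ``small set'' regime of Lemma~\ref{theorem:lower-simple} and the ``large set'' regime of Lemma~\ref{theorem:lower-bound} are tuned with a common threshold ($2\theta_{i}$) so that their union leaves no $(i,j)$ uncovered, including the boundary case $i=1,\,j=0$ where $2^{i+j-1}=1$ and the bound $C^{*}(A,G)\ge 1$ is used.
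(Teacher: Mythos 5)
Your proposal is correct and follows essentially the same route as the paper: bound each round's cost via Lemma~\ref{theorem:upper-bound}, then case-split on whether $|B_{i-1}^{j}|$ is below or above twice the packing bound, invoking Lemma~\ref{theorem:lower-simple} in the first case and Lemma~\ref{theorem:lower-bound} in the second, and finally multiply by the $\kappa(\lambda+1)$ choices of $(i,j)$. The only (harmless) differences are that you bound every term and sum rather than bounding the maximum term and multiplying by the count, and you are more careful than the paper about applying the lower-bound lemma at level $i-1$ (i.e., with $(\kappa-i+2)^{2}$ and exponent $2^{i+j-2}$), which only affects constants.
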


\begin{proof}
Suppose that there is at least one data source different than the sink $s$. The algorithm runs in $\kappa = O(\log n)$ rounds. Let $Q_{i-1}^j$ be the cost of the algorithm when transferring data from  the level $i-1$ nodes $B_{i-1}^j$ to the respective parent leader in level $i$. We have that the total cost of the algorithm at round $i$ is $Q_i = \sum_{j = 0}^{\lambda} Q_{i-1}^j$. From Lemma \ref{theorem:upper-bound},
\begin{equation}
\label{eqn:1}
Q_{i-1}^j \leq |B_{i-1}^j| 6 \cdot 2^{i+j}.
\end{equation}
Let $Q_{i'-1}^{j'} = \max_{i,j}Q_{i-1}^j$. Since at least one message is sent in the algorithm, $Q_{i'-1}^{j'} \geq 1$. From Lemma \ref{theorem:lower-simple},
\begin{equation}
\label{eqn:2}
C^*(A, G) > \max(2^{i'+j'-1},1).
\end{equation}
From Lemma \ref{theorem:lower-bound}, we also have
\begin{equation}
\label{eqn:3}
C^*(A, G) \geq 2^{i' + j' - 1} \left (\frac {|B_{i'-1}^{j'}|} {2^{10\rho} \cdot (\kappa-i+1)^2 + 1} - 1 \right).
\end{equation}

If $|B_{i' - 1}^{j'}| < 2(2^{10\rho} \cdot (\kappa-i+1)^2 + 1)$, then, from Equations \ref{eqn:1} and \ref{eqn:2}
we have:

\begin{equation}
\label{eqn:4}
\frac{Q_{i'-1}^{j'}}{C^*(A,G)} < \frac{2(2^{10\rho} \cdot (\kappa-i+1)^2 + 1) \cdot 6 \cdot 2^{i'+j'}}{2^{i'+j'-1}}
				 \leq 24 \cdot \left (2^{10\rho} \cdot (\kappa+1)^2 + 1 \right)
\end{equation}

If $|B_{i'-1}^{j'}| \geq 2(2^{10\rho} \cdot (\kappa-i+1)^2 + 1)$, then, from (\ref{eqn:1}) and (\ref{eqn:3}) we also obtain (\ref{eqn:4}). Let $Q $ be the total cost of the algorithm, $Q = \sum_i Q_i$. Since there are $\kappa$ levels and $\lambda$ clusters, from Equation \ref{eqn:4} we have
$Q \leq \kappa (\lambda+1) Q_{i'-1}^{j'} \leq \kappa (\lambda + 1) \cdot C^*(A,G) \cdot 24 \cdot \left (2^{10\rho} \cdot (\kappa+1)^2 + 1 \right).$
Thus, since $\kappa = O(\log D)$ and $\lambda = O(\log n)$, we have
$\frac {Q} {C^*(A,G)} \leq \kappa (\lambda + 1) \cdot 24 \cdot \left (2^{10\rho} \cdot (\kappa+1)^2 + 1 \right) = O(2^{10\rho} \log^3 D\cdot\log n).$
\end{proof}

If $f$ is the constant cost function, then, $\lambda = 0$ and we obtain the following corollary.

\begin{corollary}
The constant cost function $f(x) = c$,  gives $O(2^{10\rho}\log^3D)$-oblivious Steiner tree from Theorem \ref{theorem:fusion-approx}.
\end{corollary}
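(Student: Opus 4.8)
The plan is to obtain the corollary by specializing the proof of Theorem~\ref{theorem:fusion-approx} to $f(x) = c$ and then reading the resulting guarantee as a statement about Steiner trees; no new machinery is needed, so the only real work is tracking where the extra $\log n$ factor in $\sigma$ comes from and checking that it vanishes.

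First I would revisit the bucket decomposition $B_i^j$. When $f$ is the constant function, every non-empty cluster $Z_i^u$ holds fused data of size $f(|A_i^u|) = c$, so $f(|A_i^u|)\in[2^j,2^{j+1}-1]$ for the single index $j = j_0 := \lfloor\log c\rfloor$ (together with the level-$0$ bucket $B_0^0 = A$, this is $O(1)$ distinct buckets, and in the Steiner case $c = 1$ it is literally the single bucket recorded by $\lambda = 0$). Since Lemmas~\ref{lemma:internal-path-len}, \ref{lemma:max-pseudo-leaders}, \ref{theorem:upper-bound}, \ref{theorem:lower-simple} and \ref{theorem:lower-bound} are all stated for arbitrary $i$ and $j$, they apply verbatim: the upper bound $Q_{i-1}^j \le |B_{i-1}^j|\,6\cdot 2^{i+j}$ and the lower bounds~(\ref{eqn:2}), (\ref{eqn:3}) are unchanged, and in the ratio~(\ref{eqn:4}) the exponent contribution is $2^{i'+j'}/2^{i'+j'-1} = 2$ irrespective of $j'$, so the value of $j_0$ never leaves the hidden constant.

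Next I would redo the final summation in the proof of Theorem~\ref{theorem:fusion-approx} with only a single value of $j$ in play. The one place the $(\lambda+1)$ factor enters is the estimate $Q \le \kappa(\lambda+1)Q_{i'-1}^{j'}$; with $\lambda = 0$ this becomes $Q \le \kappa\,Q_{i'-1}^{j'}$, and combining with~(\ref{eqn:4}) yields
\[
\frac{Q}{C^*(A,G)} \;\le\; 24\,\kappa\,\bigl(2^{10\rho}(\kappa+1)^2 + 1\bigr) \;=\; O\!\left(2^{10\rho}\log^3 D\right),
\]
using $\kappa = O(\log D)$. This is the improved approximation factor.

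Finally I would interpret this as a Steiner-tree guarantee. With $f(\cdot)=c$ the data fusion algorithm on $\overline{T}$ routes every source's data up to $s$ along its unique path in $\overline{T}$; the union of these paths is the minimal subtree $T_S\subseteq\overline{T}$ spanning $A\cup\{s\}$, and since each edge of $T_S$ carries data of size exactly $c$, the algorithm's cost $Q$ equals $c\cdot w(T_S)$. Symmetrically, any feasible solution for $f(\cdot)=c$ must move data along a subgraph connecting $A\cup\{s\}$ with each used edge costing $c\cdot w(e)$, so $C^*(A,G) = c\cdot(\mbox{weight of a minimum Steiner tree on $A\cup\{s\}$})$. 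Hence the displayed inequality says precisely that $T_S$ is an $O(2^{10\rho}\log^3 D)$-approximate Steiner tree for $A\cup\{s\}$, simultaneously for every source set $A$, which by the definition of a $k$-oblivious Steiner tree gives the claim. I do not expect a genuine obstacle here; the only point requiring a little care is the one already flagged, namely confirming that a (possibly large) constant $c$ shifts the exponents $2^{i+j}$, $2^{i+j-1}$ only by an $O(1)$ additive amount that cancels in~(\ref{eqn:4}) and is therefore absorbed into the big-$O$.
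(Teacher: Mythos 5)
Your proposal is correct and follows essentially the same route as the paper, whose entire argument is the observation that a constant fusion function collapses the bucket index $j$ to a single value (the paper writes this as $\lambda = 0$), so the $(\lambda+1)$ factor in $Q \leq \kappa(\lambda+1)Q_{i'-1}^{j'}$ disappears and the bound becomes $O(2^{10\rho}\log^3 D)$. Your additional care about a general constant $c$ being absorbed into the hidden constant, and the explicit reading of the result as a Steiner-tree guarantee, are sound elaborations of the same argument.
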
 
\section{Simulation Results}

We simulated the Oblivious Spanning Tree and compared its performance (fusion cost) with GRID\_GIST \citep{JiaNRS06}, Maximum Matching Algorithm \citep{644191} and other common trees such as MST (Minimum Spanning Tree) and SPT (Shortest-Paths Tree). We used a 2-D grid topology for our simulation using NetworkX \citep{networkx}. 2-D grids are a special case of doubling dimension graphs and they fall under a variation of the Steiner tree problem called `Rectilinear Steiner Problem' (RSP) where the tree structure has only vertical and horizontal lines that interconnects all points and is proved to be NP-Complete \citep{RectilinearSteiner}. Since calculating a minimum weight tree structure in a 2-D grid topology (a doubling-dimension graph) is essentially an RSP, the problem we are addressing is NP-Hard. 

We build a single spanning tree in a grid with 1600 nodes. We simulate it for random sets of data sources, upto 1445, that are randomly placed. Note that GRID\_GIST is a special algorithm designed for grids and ours is a generalized algorithm. Hence, GRID\_GIST performs slightly better than OST (in Fig~\ref{fusion} in appendix). 

\section{Conclusions and Future Work}

We studied the problem of computing near-optimal fusion trees when the number of sources and the fusion cost function are unknown in the context of \textit{Doubling Dimension Graphs}. We have demonstrated that a simple, deterministic, polynomial-time algorithm based on independent sets can provide a near-optimal data structure for data-fusion (under the assumptions of concave cost function and doubling dimension graphs). The algorithm constructs a spanning tree which is then altered to produce a modified tree. We have shown that this algorithm guarantees $\log^4 n$-approximation over the optimal cost. As part of our future work, we are looking into the same problem on planar graphs, extending to multiple sinks and incorporating fault-tolerance and load-balancing.

\bibliographystyle{plainnat}
\bibliography{Bibliography}

\begin{thebibliography}{23}
\providecommand{\natexlab}[1]{#1}
\providecommand{\url}[1]{\texttt{#1}}
\expandafter\ifx\csname urlstyle\endcsname\relax
  \providecommand{\doi}[1]{doi: #1}\else
  \providecommand{\doi}{doi: \begingroup \urlstyle{rm}\Url}\fi

\bibitem[Abraham et~al.(2006)Abraham, Gavoille, Goldberg, and Malkhi]{1154086}
Ittai Abraham, Cyril Gavoille, Andrew~V. Goldberg, and Dahlia Malkhi.
\newblock Routing in networks with low doubling dimension.
\newblock In \emph{ICDCS '06}, page~75, Washington, DC, USA, 2006.
\newblock ISBN 0-7695-2540-7.

\bibitem[Awerbuch and Azar(1997)]{796341}
B.~Awerbuch and Y.~Azar.
\newblock Buy-at-bulk network design.
\newblock In \emph{FOCS '97}, page 542, Washington, DC, USA, 1997. IEEE
  Computer Society.

\bibitem[Bartal(1998)]{276725}
Yair Bartal.
\newblock On approximating arbitrary metrices by tree metrics.
\newblock In \emph{STOC '98}, pages 161--168, New York, NY, USA, 1998. ACM.

\bibitem[Chekuri et~al.(2006)Chekuri, Hajiaghayi, Kortsarz, and
  Salavatipour]{1170547}
C.~Chekuri, M.~T. Hajiaghayi, G.~Kortsarz, and M.~R. Salavatipour.
\newblock Approximation algorithms for non-uniform buy-at-bulk network design.
\newblock In \emph{FOCS '06}, pages 677--686, Washington, DC, USA, 2006.

\bibitem[Fraigniaud(2007)]{Fraigniaud07theinframetric}
Pierre Fraigniaud.
\newblock The inframetric model for the internet.
\newblock Technical report, 2007.

\bibitem[Fraigniaud et~al.(2006)Fraigniaud, Lebhar, and
  Lotker]{Fraigniaud06adoubling}
Pierre Fraigniaud, Emmanuelle Lebhar, and Zvi Lotker.
\newblock A doubling dimension threshold $\theta(\log \log n)$ for augmented
  graph navigability.
\newblock In \emph{ESA, LNCS 4168}, pages 376--386. Springer, 2006.

\bibitem[Funke et~al.(2006)Funke, Guibas, Nguyen, and Wang]{FuGui2006}
Stefan Funke, Leonidas Guibas, An~Nguyen, and Yusu Wang.
\newblock Distance-sensitive information brokerage in sensor networks.
\newblock In \emph{DCOSS 2006}, volume 4026 of \emph{LNCS}, pages 234--251, San
  Francisco, USA, 2006. IEEE, Springer.

\bibitem[Gao et~al.(2009)Gao, Guibas, Milosavljevic, and
  Zhou]{gao09distributed}
Jie Gao, Leonidas~J. Guibas, Nikola Milosavljevic, and Dengpan Zhou.
\newblock Distributed resource management and matching in sensor networks.
\newblock In \emph{Proc. of (IPSN'09)}, pages 97--108, April 2009.

\bibitem[Garey and Johnson(1977)]{RectilinearSteiner}
M.~R. Garey and D.~S. Johnson.
\newblock The rectilinear steiner tree problem is np-complete.
\newblock \emph{SIAM Journal on Applied Mathematics}, 32\penalty0 (4):\penalty0
  826--834, 1977.
\newblock URL \url{http://www.jstor.org/stable/2100192}.

\bibitem[Goel and Estrin(2003)]{644191}
Ashish Goel and Deborah Estrin.
\newblock Simultaneous optimization for concave costs: single sink aggregation
  or single source buy-at-bulk.
\newblock In \emph{SODA '03}, pages 499--505, Philadelphia, PA, USA, 2003.
  SIAM.

\bibitem[Goel and Post(2009)]{DBLP:journals/corr/abs-0908-3740}
Ashish Goel and Ian Post.
\newblock An oblivious o(1)-approximation for single source buy-at-bulk.
\newblock \emph{CoRR}, abs/0908.3740, 2009.

\bibitem[Guha et~al.(2001)Guha, Meyerson, and Munagala]{380827}
Sudipto Guha, Adam Meyerson, and Kamesh Munagala.
\newblock A constant factor approximation for the single sink edge installation
  problems.
\newblock In \emph{STOC '01}, pages 383--388, New York, NY, USA, 2001. ACM.
\newblock ISBN 1-58113-349-9.

\bibitem[Gupta et~al.(2003)Gupta, Krauthgamer, and Lee]{946308}
Anupam Gupta, Robert Krauthgamer, and James~R. Lee.
\newblock Bounded geometries, fractals, and low-distortion embeddings.
\newblock In \emph{FOCS '03}, page 534, Washington, DC, USA, 2003. IEEE
  Computer Society.

\bibitem[Hagberg et~al.(2008)Hagberg, Schult, and Swart]{networkx}
Aric~A. Hagberg, Daniel~A. Schult, and Pieter~J. Swart.
\newblock Exploring network structure, dynamics, and function using {NetworkX}.
\newblock In \emph{Proceedings of the 7th Python in Science Conference
  (SciPy2008)}, pages 11--15, Pasadena, CA USA, August 2008.

\bibitem[Jia et~al.(2005)Jia, Lin, Noubir, Rajaraman, and Sundaram]{1060649}
Lujun Jia, Guolong Lin, Guevara Noubir, Rajmohan Rajaraman, and Ravi Sundaram.
\newblock Universal approximations for tsp, steiner tree, and set cover.
\newblock In \emph{STOC '05: Proceedings of the thirty-seventh annual ACM
  symposium on Theory of computing}, pages 386--395, New York, NY, USA, 2005.
  ACM.

\bibitem[Jia et~al.(2006)Jia, Noubir, Rajaraman, and Sundaram]{JiaNRS06}
Lujun Jia, Guevara Noubir, Rajmohan Rajaraman, and Ravi Sundaram.
\newblock Gist: Group-independent spanning tree for data aggregation in dense
  sensor networks.
\newblock In \emph{DCOSS}, pages 282--304, 2006.

\bibitem[Kleinberg et~al.(2009)Kleinberg, Slivkins, and Wexler]{1568322}
Jon Kleinberg, Aleksandrs Slivkins, and Tom Wexler.
\newblock Triangulation and embedding using small sets of beacons.
\newblock \emph{J. ACM}, 56\penalty0 (6):\penalty0 1--37, 2009.
\newblock ISSN 0004-5411.
\newblock \doi{http://doi.acm.org/10.1145/1568318.1568322}.

\bibitem[Konjevod et~al.(2008)Konjevod, Richa, and Xia]{1432318}
Goran Konjevod, Andr\'{e}a~W. Richa, and Donglin Xia.
\newblock Dynamic routing and location services in metrics of low doubling
  dimension.
\newblock In \emph{DISC '08}, pages 379--393. Springer-Verlag, 2008.

\bibitem[Kuhn et~al.(2005)Kuhn, Moscibroda, and Wattenhofer]{1073826}
Fabian Kuhn, Thomas Moscibroda, and Roger Wattenhofer.
\newblock On the locality of bounded growth.
\newblock In \emph{PODC '05}, pages 60--68, New York, NY, USA, 2005. ACM.

\bibitem[Nieberg(2006)]{eemcs8079}
T.~Nieberg.
\newblock \emph{Independent and Dominating Sets in Wireless Communication
  Graphs}.
\newblock PhD thesis, University of Twente, Zwolle, April 2006.

\bibitem[Pemmaraju and Pirwani(2006)]{1132922}
Sriram~V. Pemmaraju and Imran~A. Pirwani.
\newblock Energy conservation via domatic partitions.
\newblock In \emph{MobiHoc 2006}, pages 143--154, New York, NY, USA, 2006. ACM.
\newblock \doi{http://doi.acm.org/10.1145/1132905.1132922}.

\bibitem[Salman et~al.(2000)Salman, Cheriyan, Ravi, and Subramanian]{589033}
F.~S. Salman, J.~Cheriyan, R.~Ravi, and S.~Subramanian.
\newblock Approximating the single-sink link-installation problem in network
  design.
\newblock \emph{SIAM J. on Optimization}, 11\penalty0 (3):\penalty0 595--610,
  2000.
\newblock ISSN 1052-6234.

\bibitem[Srinivasagopalan et~al.(2009)Srinivasagopalan, Busch, and
  Iyengar]{sri-algosensors-2009}
Srivathsan Srinivasagopalan, Costas Busch, and S.~Sitharama Iyengar.
\newblock Brief announcement: Universal data aggregation trees for sensor
  networks in low doubling metrics.
\newblock In \emph{Algorithmic Aspects of Wireless Sensor Networks: 5th
  International Workshop, ALGOSENSORS 2009, Rhodes, Greece, July 10-11, 2009.},
  pages 151--152, Berlin, Heidelberg, 2009. Springer-Verlag.
\newblock ISBN 978-3-642-05433-4.

\end{thebibliography}



\appendix
\newpage
\pagenumbering{roman}
\section{Observations and Proofs}

\begin{lemma} [\textbf{Presence of a Pseudo-Leader}]
\label{lemma:pseudo-leader}
	The {\ttfamily{ModifyPath}}$(p_\omega, p_i)$ function gurantees selection of a $(\omega + 1)$-level pseudo-leader.
\end{lemma}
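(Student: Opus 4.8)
The plan is to reduce the lemma to showing that the path $p_i$ passed to \texttt{ModifyPath}$(p_\omega,p_i)$ already carries a node whose assigned level is exactly $\omega+1$; \texttt{ModifyPath} then simply returns the node of that level nearest to the intersection point $y$, which is by definition the pseudo-leader. Since \texttt{ModifyPath}$(p_\omega,p_i)$ is invoked only for a pruned path $p_\omega\in P^{pr}_\omega$ that meets a strictly higher path $p_i$ (with $i>\omega$) at some point $y$, it suffices to exhibit one level-$(\omega+1)$ node somewhere on $p_i$.

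First I would recall how \texttt{AssignLevels}$(p_i,H,i)$ labels $p_i$: its two end-nodes keep their levels $i$ and $i+1$, while the internal nodes receive levels $i-1,i-2,\dots,0$ through a nested chain $S_{i-1}\subseteq S_{i-2}\subseteq\cdots\subseteq S_0$ of subsets of $p_i$, where each $S_\ell$ is a maximal $2^\ell$-independent subset of $p_i$ containing both end-nodes, and a node lies at level exactly $\ell$ iff it belongs to $S_\ell\setminus S_{\ell+1}$. If $i=\omega+1$ we are done at once, since the end-node of $p_i$ at level $i$ is already a level-$(\omega+1)$ node. So I would assume $i\ge\omega+2$, so that $0\le\omega+1\le i-1$ and level $\omega+1$ genuinely lies in the range produced by \texttt{AssignLevels}; it then remains to prove $S_{\omega+1}\setminus S_{\omega+2}\neq\emptyset$.

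To establish this I would first bound $p_i$ from below in length: its end-nodes $v_a,v_b$ are distinct, and since $I_i=MIS(G,I_{i+1},2^i)$ with $v_a\in I_i\setminus I_{i+1}$ and $v_b\in\near_G(v_a,I_{i+1})$, we get $\dist_G(v_a,v_b)=\dist_G(v_a,I_{i+1})\ge 2^i\ge 2^{\omega+2}$. Hence $S_{\omega+2}$, which contains $v_a$ and $v_b$, has at least two elements, so it has two consecutive nodes along $p_i$; by $2^{\omega+2}$-independence their $G$-distance is at least $2^{\omega+2}=2\cdot 2^{\omega+1}$, and since for an original shortest path $p_i$ the distance along $p_i$ between two of its nodes equals their $G$-distance, the sub-path between them contains a node at $G$-distance at least $2^{\omega+1}$ from each of them, hence from every node of $S_{\omega+2}$. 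Maximality of $S_{\omega+1}$ then forces at least one such node into $S_{\omega+1}$, and every node of $S_{\omega+1}\setminus S_{\omega+2}$ carries level exactly $\omega+1$. Thus $p_i$ has a level-$(\omega+1)$ node and \texttt{ModifyPath} returns the one nearest to $y$.

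The step I expect to be the main obstacle is making this refinement argument fully rigorous, and in particular handling uniformly the case in which $p_i$ is itself a concatenated modified path rather than an original shortest path: there the distance measured along $p_i$ and the $G$-distance between nodes of $p_i$ need not coincide, and the lower bound on the distance between its end-nodes is no longer immediate. I would deal with this by invoking Lemma~\ref{lemma:internal-path-len} (the total-internal-path structure and the $3\cdot 2^i$ length bound) to control such concatenated paths and to recover a length bound sufficient to guarantee that the $2^{\omega+1}$-net of $p_i$ strictly refines its $2^{\omega+2}$-net, which is exactly what yields the required pseudo-leader.
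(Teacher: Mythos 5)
Your proposal takes essentially the same route as the paper's proof: a case split on whether the higher-level path already terminates at a level-$(\omega+1)$ node (your case $i=\omega+1$) versus whether \texttt{AssignLevels} must have placed an internal level-$(\omega+1)$ node on $p_i$. In fact your version is more rigorous than the paper's, which simply asserts the second case without the net-refinement/length argument you supply, and which also does not address the concatenated-modified-path subtlety you flag.
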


\begin{proof}
	Suppose path $p_w$ intersects a higher-level path $p_i$. Let the start-node of $p_\omega$ be $u$ and let the end-node of $p_i$ be $v$. Note that a path $p_i$ goes from level $i$ to level $i+1$. There could be two cases for the presence of a pseudo-leader in $p_i$. If level of $v$ is $\omega + 1$, then, $v$ itself acts as a pseudo-leader for $u$. If level of $v$ is greater than $\omega + 1$, then, $p_i$ must have some nodes (within its end-nodes) that have been assigned to level $\omega+1$ (by the {\ttfamily{AssignLevels}} function) . Hence, in either case, a psuedo-leader is guaranteed to be found in $p_i$ for $u$.
\end{proof}

\begin{figure}[!h]
\centering
\subfloat[Intersecting paths ]{\label{fig:prune1}
\includegraphics[scale=0.28]{./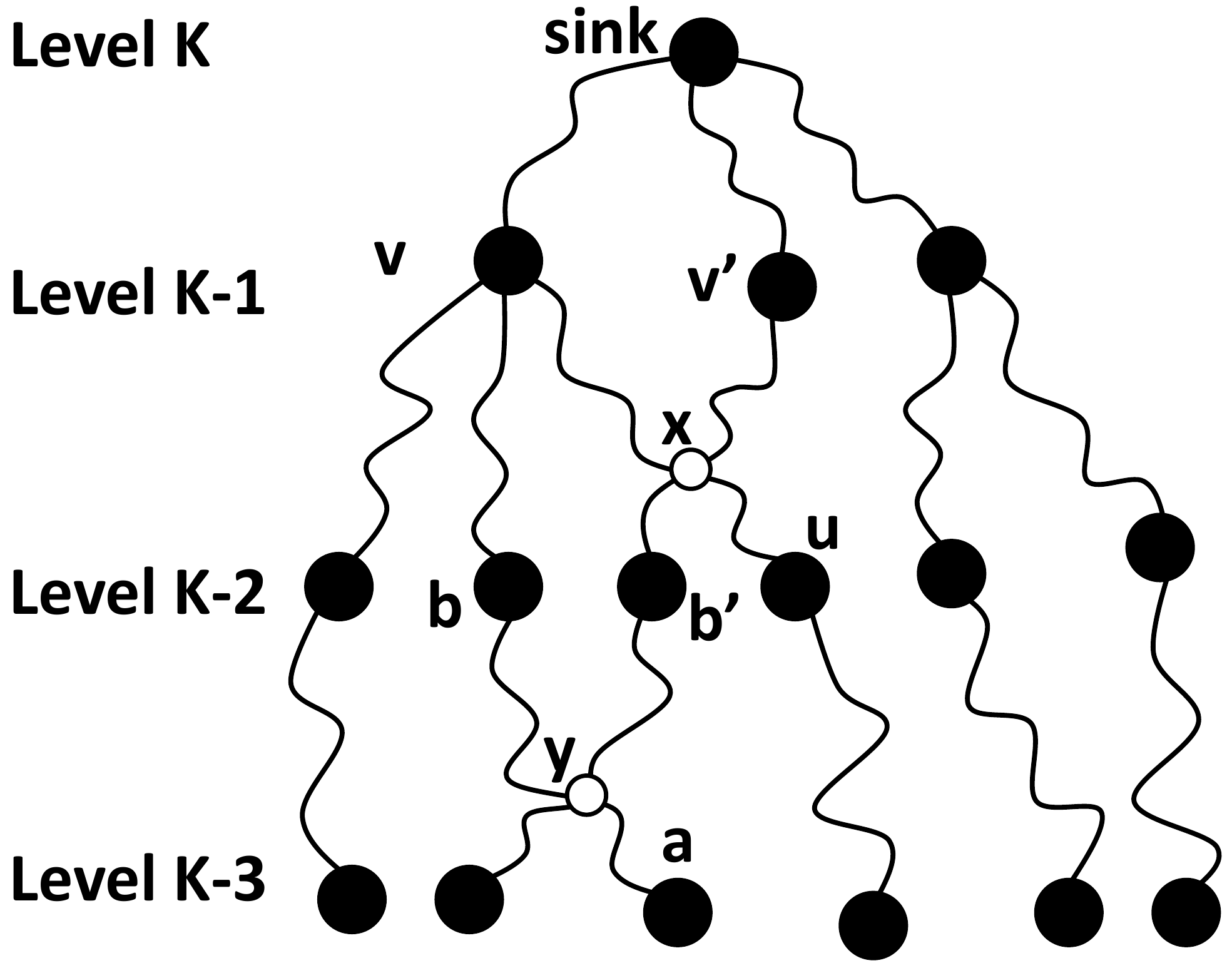}} \qquad
\subfloat[Pruned paths $y,b$ and $x,v^\prime$.]{\label{fig:prune2} 
\includegraphics[scale=0.28]{./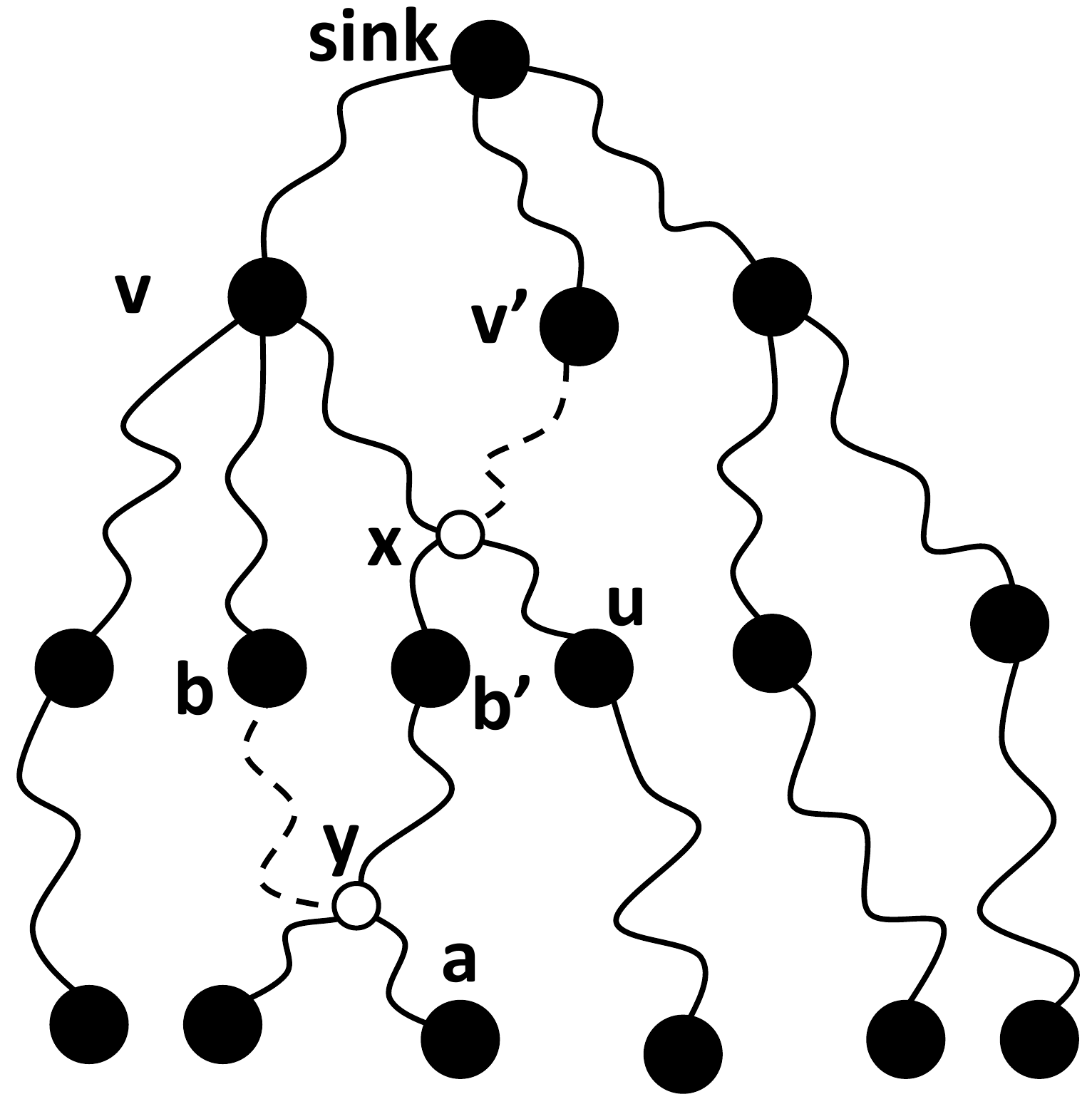}} \qquad
\subfloat[Modified Paths $p_{a,b^\prime}$ and $p_{b^\prime, v}$] {\label{fig:modified}
\includegraphics[scale=0.28]{./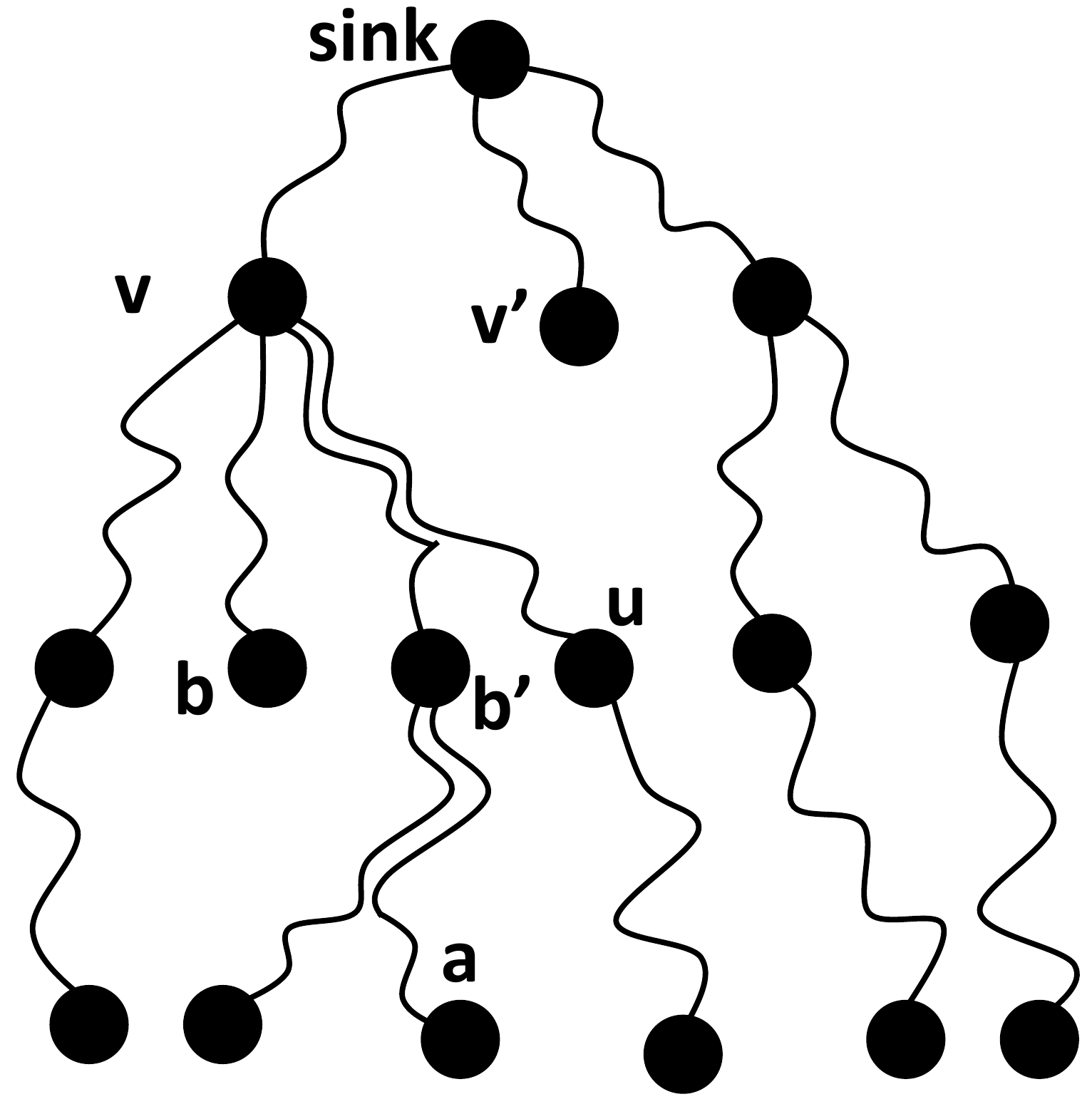}}
\caption{Pruning and Tree Modification.}
\label{fig:ModifiedPath}
\end{figure}

\begin{lemma}[\textbf{Upper Bound on $|p_{\omega}^{\prime}|$}]
\label{lemma:max-path-length}
	The upper bound on the length of $p_{\omega}^{\prime}$ is $(3\cdot2^{\omega} - 2)$.
\end{lemma}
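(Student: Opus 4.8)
The plan is to read $p'_\omega$ as the level-$\omega$ path segment produced in the construction: in Algorithm~\ref{alg:SpanningTree} it is the prefix ``path segment from $v$ to $u$'' obtained when the shortest path $p_v$ from a level-$\omega$ leader $v$ to $\ell=\near_G(v,I_{\omega+1})$ is cut at its first intersection $u$ with a higher-level path, and (after Algorithm~\ref{alg:ModifiedTree}) possibly re-extended to a pseudo-leader. So I would bound $|p'_\omega|$ by controlling two quantities: the length of $p_v$ itself, and the length of the detour that re-routing to a pseudo-leader can add.

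First I would bound $|p_v|$. If $v\in I_{\omega+1}$ the segment is empty, so assume $v\in I_\omega\setminus I_{\omega+1}$. Since $I_{\omega+1}=MIS(G,I_{\omega+2},2^{\omega+1})$ is \emph{maximal}, $v$ cannot be at distance $\ge 2^{\omega+1}$ from every node of $I_{\omega+1}$ (otherwise $I_{\omega+1}\cup\{v\}$ would still meet the independence and containment requirements, contradicting maximality); as edge weights are positive integers this gives $\dist_G(v,I_{\omega+1})\le 2^{\omega+1}-1$, hence $|p_v|=\dist_G(v,\ell)\le 2^{\omega+1}-1$. Since pruning only removes a suffix, the bare pruned segment of Algorithm~\ref{alg:SpanningTree} already satisfies $|p'_\omega|\le 2^{\omega+1}-1=2\cdot 2^\omega-1\le 3\cdot 2^\omega-2$ using $2^\omega\ge 1$.

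For the modified path I would additionally account for the piece that {\ttfamily ModifyPath} attaches from the intersection point $y$ to the pseudo-leader $v_{\omega+1}$ along the intersected path $p_i$, $i>\omega$ (its existence is Lemma~\ref{lemma:pseudo-leader}). {\ttfamily ModifyPath} picks the \emph{nearest} level-$(\omega+1)$ node of $p_i$ to $y$, and {\ttfamily AssignLevels} builds the level-$(\omega+1)$ nodes of $p_i$ as a maximal independent set refined out of the interior nodes not already promoted above level $\omega+1$; since $y$ itself carries a level $\le\omega$, maximality forces a level-$(\omega+1)$ representative within the governing separation of $y$ along $p_i$, adding at most $2^{\omega}-1$. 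Combining, $|p'_\omega|\le(2^{\omega+1}-1)+(2^{\omega}-1)=3\cdot 2^{\omega}-2$. The main obstacle is precisely this second estimate: one must argue the detour to the pseudo-leader is bounded by $2^{\omega}-1$ rather than the $2^{\omega+1}-1$ that bare maximality hands you, which relies on the intersection node sitting at level $\le\omega$ and on {\ttfamily AssignLevels} refining the independent sets level by level so that a level-$(\omega+1)$ node appears in every stretch of that length; everything else is the routine arithmetic above.
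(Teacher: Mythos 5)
Your decomposition is the same as the paper's — the pruned prefix from $x$ to the intersection point $y$, plus the detour along $p_i$ from $y$ to the pseudo-leader — but you distribute the two bounds the opposite way, and the half you yourself flag as the crux does not hold. You bound the detour by $2^{\omega}-1$ on the grounds that $y$ has level at most $\omega$ and maximality of the level-$(\omega+1)$ independent set on $p_i$ then ``forces a representative within the governing separation.'' But $L_{\omega+1}=MIS(p_i,H,2^{\omega+1})$ only guarantees that every node \emph{excluded} from it (which is exactly what ``$y$ has level $\le\omega$'' gives you) lies within $2^{\omega+1}-1$ of some member: two consecutive level-$(\omega+1)$ nodes on $p_i$ may legitimately sit $2^{\omega+2}-2$ apart while the set remains maximal, and a node midway between them is then at distance $2^{\omega+1}-1$ from its nearest representative. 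So the detour bound is $2^{\omega+1}-1$, not $2^{\omega}-1$, and combined with your (correct but loose) prefix bound $|p_v|\le 2^{\omega+1}-1$ you only obtain $2^{\omega+2}-2=4\cdot 2^{\omega}-2$, which misses the stated $3\cdot 2^{\omega}-2$.

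The paper closes the count on the other side: it charges the detour the full $2^{\omega+1}-1$ (precisely the maximality bound you cannot improve) and instead argues that the pruned prefix from $x$ to $y$ has length at most $2^{\omega}-1$, reasoning that if $x$ travelled distance $2^{\omega}$ or more before meeting the higher-level path, the intersection point would already have served as its leader. That tightening of the \emph{prefix}, not the detour, is the step your argument is missing; without it (or an equivalent replacement) what you have proved is the weaker bound $4\cdot 2^{\omega}-2$.
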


\begin{proof}
   Consider a path $p_\omega$ that starts at $x \notin p_i$ and intersects another path $p_i$ at $y \in p_i$. Since $p_\omega$ is a pruned path, its distance from $x$ to the intersection point $y$ is atmost $2^\omega -1$ (if it was $2^\omega$ or more, point $y$ would have been its original leader). {\ttfamily {ModifyPath}} will attempt to seek the nearest $(\omega+1)$-level node (pseudo-leader) on $p_i$ from $y$ (Lemma \ref{lemma:pseudo-leader}). Note that $y$ itself cannot be the pseudo-leader for $x$ because, if it was, then, $p_\omega$ would not have been a pruned path. The distance from $y$ to a pseudo-leader $v$ on $p_i$ would be atmost $2^{\omega+1}-1$ because if this distance was more than $2^{\omega+1}-1$, we would have found another pseudo-leader $v^\prime$ that is $2^{\omega+1}$ distance away from $v$ and closer to $y$. This is due to the presence of $(2^{\omega+1})$-independent set nodes on this path $p_i$ computed by {\ttfamily{AssignLevels}}. Note that $y$ cannot be an end-node of $p_i$ and $v$ could be one of the end-nodes of $p_i$. Hence, the length of $p_\omega^\prime$ could be atmost $2^\omega -1 + 2^{\omega+1} -1 = 3\cdot2^\omega - 2$. Note that $p_i$ itself could be a stretched pruned path and the upper bound holds irrespective of the length of $p_i$.
\end{proof}

Figure \ref{fig:ModifiedPath} gives an example of intersecting path and its modification to reach a pseudo-leader and form a modified path.

\begin{function}
\KwIn{Paths $p_i$ and $p_m$ where $p_m$ intersects $p_i$ and $i > m$}
\KwOut{A modified path $\overline{p}_m$.}
\BlankLine

	\tcp{Let $p_m$ start from $x \notin p_i$ and intersect at $y \in p_i$ along 
	its path to leader $\ell_{m+1}$.}

	$v_{m+1} \gets \mbox{From $y$, identify the nearest level-$(m+1)$ node $v$} \in p_i$\;
	$p_m^a \gets \mbox{subpath from $x$ to $y$ in $p_m$}$\;
	$p_m^b \gets \mbox{subpath from $y$ to $v_{m+1}$ in $p_i$}$\;
	$\bar{p}_m \gets p_m^a + p_m^b$ \tcp*{Concatenate $p_m^a$ and $p_m^b$.}
	\Return $\overline{p}_m$\;

\caption{ModifyPath($p_m$, $p_i$)}
\label{fun:ModifyPath}
\end{function}


\begin{function}
\SetKwFunction{AssignLevels}{AssignLevels}
\KwIn{Path $p_i$, set of end-nodes $H$ of $p_i$ , level $i$.}
\KwOut{Assignment of levels to all nodes in $p_i$.}
\BlankLine

$L_{\lambda} \gets \phi$ \tcp*{Set of $2^{\lambda}$-independent nodes}

\For{$\lambda \gets (i-1)$ \KwTo $0$}{
	\tcp{Find $2^{\lambda}$-independent nodes at levels $\lambda = (i-1),(i-2),\ldots,1,0$.}
	$L_{\lambda} \gets MIS(p_i, H, 2^\lambda)$\;
	Assign level $\lambda$ to nodes in $L_{\lambda}$.
}

\caption{AssignLevels($p_i$, $H$, $i$)}
\label{fun:AssignLevels}
\end{function}


Consider an arbitrary node $x \in G$ with its neighborhood $N_G(x, 2^r)$, where $r \geq 0$.

\begin{lemma} [\textbf{Max path segments}]
\label{lemma:max-path-segments}
	The total number of path segments $p \in T$ at level $i$ or higher that cross $N_G(x,2^r)$ is at most $2^{\rho(r-i+3)} \cdot (\kappa-i+1)$.
\end{lemma}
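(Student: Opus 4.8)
The plan is to bound the number of level-$i$-or-higher path segments crossing $N_G(x,2^r)$ by counting them level-by-level and using the doubling property together with the independence of leaders at each level. First I would fix a level $j$ with $i \le j \le \kappa-1$ and count the path segments $p_j$ (each going from a level-$j$ leader to a level-$(j+1)$ leader) that intersect $N_G(x,2^r)$. The key observation is that if such a path segment $p_j$ crosses $N_G(x,2^r)$, then its starting leader $\ell \in I_j$ lies within distance $2^r + |p_j|$ of $x$; since by construction a (non-pruned) path of level $j$ has length at most $2^{j+1}-1 < 2^{j+1}$, and since $j$ ranges over values that are $\le r$ only in the regime where this is meaningful, the starting leader lies in $N_G(x, 2^r + 2^{j+1}) \subseteq N_G(x, 2^{r+2})$ (using $j \le r$, so $2^{j+1} \le 2^{r+1}$, hence $2^r + 2^{j+1} \le 2^r + 2^{r+1} < 2^{r+2}$).

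Next I would bound the number of level-$j$ leaders inside $N_G(x,2^{r+2})$. Since the members of $I_j$ are pairwise at distance at least $2^j$, a standard packing argument via the doubling property applies: the neighborhood $N_G(x, 2^{r+2})$ of radius $2^{r+2}$ can be covered by $2^{\rho \cdot (\text{number of halvings})}$ neighborhoods of radius $2^{j-1}$, and by the Observation in the paper this count is $2^{\rho \log(2^{r+2}/2^{j-1})} = 2^{\rho(r-j+3)}$; each such small neighborhood, having diameter $< 2^j$, contains at most one leader of $I_j$. Hence there are at most $2^{\rho(r-j+3)}$ level-$j$ path segments crossing $N_G(x,2^r)$. (One should be slightly careful that a single leader may start more than one path segment of its level toward its parent, but by construction each node of $I_j$ initiates exactly one shortest path to its chosen parent in $I_{j+1}$, so the bound on leaders bounds the bound on path segments.)

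Finally I would sum over all admissible levels $j$ from $i$ up to $\kappa-1$ (or to $r$, whichever is the relevant ceiling used implicitly by the statement, since for $j > r$ the radius estimate $2^{r+2}$ still dominates and the same bound $2^{\rho(r-j+3)} \le 2^{\rho \cdot 3}$ holds, though crudely). The dominant term is $j = i$, giving $2^{\rho(r-i+3)}$; since there are at most $\kappa - i + 1$ levels in the range, the total is at most $(\kappa-i+1)\cdot 2^{\rho(r-i+3)}$, which is the claimed bound. The main obstacle I anticipate is being careful about the bookkeeping of path lengths versus hop-distances and about the borderline levels $j$ near or above $r$: the clean packing estimate wants $2^{j-1}$-balls to each hold one leader, which needs $2^j \le$ radius-related quantities to line up, and one must verify the radius inflation from $2^r$ to $2^{r+2}$ is enough to absorb both the $2^r$ term and the path-length term $2^{j+1}$ uniformly for all $j$ in the summation range. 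Once that uniform inflation is justified, the per-level count and the summation are routine.
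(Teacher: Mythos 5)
Your proposal is correct and follows essentially the same route as the paper's proof: inflate the neighborhood $N_G(x,2^r)$ by the maximum path length at each level so that it captures the starting leaders, use the $2^j$-independence of $I_j$ together with the doubling property to pack at most $2^{\rho(r-j+3)}$ such leaders (hence path segments) per level, and multiply by the $\kappa-i+1$ levels. Your treatment of the borderline levels $j>r$ and of the uniform radius inflation is in fact somewhat more careful than the paper's own writeup, which uses the same covering count but states it more loosely.
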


\begin{proof}
	We know, by construction, that the length of a path $p_{i+j} \in T$ is atmost $2^{i+j}$ where $0 \leq j \leq (\kappa-i)$ and that there is atmost one leader $\ell_{i+j} \in I_i$ within $N(x, 2^{i+j}/2)$. Since we are looking at the number of path segments $p_{i+j}$ that go through $N(x,2^r)$, consider a large neighborhood $N(x,(2^{i+j} + 2^r))$ and determine the number of neighborhoods of radius $2^{i+j}/2$; $N(x,2^{i+j}/2)$. This is equivalent to atmost $2^{\rho{\log\frac{(2^{i+j} + 2^r)}{2^{i+j}/2}}} \leq 2^{\rho\log(r-i+3)} = 2^{\rho(r-i+3)}$ because of doubling dimension. For all paths that span the levels from $i$ to $\kappa$, the total number of path segments that cross $N(x,2^{i+j}/2)$ is equal to $2^{\rho(r-i+3)} \cdot (\kappa-i+1)$. 
\end{proof}

\begin{lemma} [\textbf{Max modified paths in a path segment}]
\label{lemma:max-modified-paths}
Consider a path segment $p \in T$ that crosses  $N_G(x,2^r)$.
The total number of modified paths ${\overline p} \in {\overline T}$ at level $i$ or higher that use nodes in $p \cap N_G(x,2^r)$ is at most $2^{\rho(r-i+1)} \cdot (\kappa-i+1)$.
\end{lemma}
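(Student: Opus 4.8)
The plan is to count modified paths of level $i$ or higher that use nodes lying in the intersection $p \cap N_G(x,2^r)$, where $p \in T$ is a single fixed path segment crossing $N_G(x,2^r)$. The key structural observation is that a modified path $\overline{p}_\omega$ (with $\omega \geq i$) can touch $p$ only because its originating pruned path intersected $p$ (or a path in $T$ that $p$ coincides with after modification), and then {\ttfamily ModifyPath} extends it along $p$ to the nearest level-$(\omega+1)$ pseudo-leader on $p$. So each such modified path is ``anchored'' to $p$ at a level-$(\omega+1)$ pseudo-leader node lying on $p$. Hence it suffices to bound, for each level $\ell$ with $\ell \geq i+1$, the number of level-$\ell$ nodes on $p$ that lie within the relevant neighborhood, and then sum over the $O(\kappa - i)$ levels.

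First I would invoke the spacing guarantee from {\ttfamily AssignLevels}: the level-$\ell$ nodes placed on $p$ form a $2^\ell$-independent set along $p$, so consecutive level-$\ell$ anchor nodes on $p$ are at distance at least $2^\ell$ from each other in $G$. Next, because {\ttfamily ModifyPath} reaches from the intersection point $y$ a pseudo-leader at distance at most $2^{\ell}-1$ along $p$ (the bound used in Lemma \ref{lemma:max-path-length}), any level-$\ell$ pseudo-leader that serves as the anchor of a modified path meeting $p \cap N_G(x,2^r)$ must itself lie within $N_G(x, 2^r + 2^\ell)$. Combining the lower bound $2^\ell$ on mutual separation with a packing argument in the doubling graph: the number of $2^{\ell-1}$-neighborhoods needed to cover $N_G(x, 2^r + 2^\ell)$ is at most $2^{\rho \log((2^r + 2^\ell)/2^{\ell-1})} \leq 2^{\rho(r-\ell+2)}$, and each such small neighborhood contains at most one level-$\ell$ node of $p$. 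For $\ell = i$ (the extreme case, giving the dominant term) this is $2^{\rho(r-i+2)}$; summing a geometric-type count over the $(\kappa - i + 1)$ levels $\ell = i, i+1, \dots, \kappa$ and absorbing the geometric sum into one more factor of $2^\rho$ yields the claimed bound $2^{\rho(r-i+1)} \cdot (\kappa - i + 1)$ — here I would be slightly careful that the exponent arithmetic matches the paper's $(r-i+1)$ rather than $(r-i+2)$, presumably because only pseudo-leaders strictly above level $i$ contribute new modified paths beyond the base path itself, or because the geometric sum over levels collapses the extra $2^\rho$.

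The main obstacle I expect is the bookkeeping of \emph{which} paths count as distinct modified paths through $p$: a single modified path may itself be a concatenation of several pruned segments from lower levels, and it may cross $p$ more than once, so I must argue each modified path is charged to a unique anchor node on $p$ (or to a bounded number of them) to avoid double counting. I would handle this by charging each level-$\omega$ modified path to the \emph{last} (highest-level, closest-to-$s$) pseudo-leader on $p$ that it uses, and observing that since the modified path leaves $p$ at a level-$(\omega+1)$ node and never returns to $p$ at a lower level within $N_G(x,2^r)$ (by the independence spacing), this charging is injective up to a constant. Once the charging is pinned down, the remaining steps are the two standard doubling-dimension packing estimates already used in Lemma \ref{lemma:max-path-segments}, applied level-by-level and summed.
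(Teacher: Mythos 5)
Your overall technique --- a doubling-dimension packing argument keyed to a set of well-separated representative nodes, combined with the length bound of Lemma~\ref{lemma:max-path-length} to confine those representatives to a slightly enlarged neighborhood --- is the same one the paper uses. But you and the paper pick different representatives, and your choice creates a gap. You charge each level-$\omega$ modified path to the level-$(\omega+1)$ pseudo-leader on $p$ at which it is anchored, and then count anchors using their $2^{\omega+1}$-separation along $p$. This charging is not injective: {\ttfamily ModifyPath} routes a pruned path to the \emph{nearest} level-$(\omega+1)$ node on $p$ from its intersection point, so multiple distinct level-$\omega$ pruned paths whose intersection points with $p$ fall in the same $\Theta(2^{\omega+1})$-long stretch of $p$ are all extended to the \emph{same} pseudo-leader. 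Your remark that the charging is ``injective up to a constant'' is therefore doing real work: the multiplicity of an anchor is itself bounded only by a second packing argument on the start nodes of those pruned paths (which are $2^{\omega}$-separated members of $I_\omega$ lying within distance about $3\cdot 2^{\omega+1}$ of the anchor), and that multiplicity is $2^{O(\rho)}$, not an absolute constant. Since the claimed bound $2^{\rho(r-i+1)}\cdot(\kappa-i+1)$ has no slack to absorb such a factor, this step cannot be waved through, and you would in any case be running two packing arguments where one suffices.

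The paper avoids the issue by charging each level-$(i+j)$ modified path directly to its \emph{start node}, which lies in $I_{i+j}$: distinct modified paths of a given level have distinct start nodes, so the charging is trivially injective, and Lemma~\ref{lemma:max-path-length} (length at most $3\cdot 2^{i+j}-2 < 2^{i+j+2}$) places every relevant start node inside $N_G(x,\, 2^r + 2^{i+j+2})$, after which a single packing estimate per level and a sum over the $\kappa-i+1$ levels gives the stated bound. If you switch your anchor set from pseudo-leaders on $p$ to the start nodes of the modified paths, the rest of your argument goes through; this also disposes of your (legitimate) worry about a single modified path meeting $p$ several times, since start-node charging counts each path exactly once regardless of how often it touches $p$.
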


\begin{proof}
	Let $Q = p \cap N_G(x,2^r)$. From Lemma \ref{lemma:max-path-length}, we know that the maximum length of any modified path $\overline{p}_{i+j}$ would be $3 \cdot 2^{i+j} -2 < 4 \cdot 2^{i+j} = 2^{i+j+2}$. To find the total number of modified paths $\overline{p}_{i+j}$ that passes through $Q$, we consider a larger neighborhood $N(x,2^{i+j+2}+2^r)$ and find the number of $N(y,2^(\frac{i+j+2}{2}))$ that would cover the larger neighborhood. Since each $\overline{p}_{i+j}$ has start node in $I_{i+j}$ and by doubling property of the graph, it can be computed to $2^{\rho\log{\frac{2^{i+j+2} + 2^r}{2^{i+j+1}}}} \leq 2^{\rho(r-i+1)}$. Since $j \in [0,(\kappa-i)]$, the total number of paths would be $2^{\rho(r-i+1)} \cdot (\kappa-i+1)$.
\end{proof}

\begin{figure}
\begin{center}
\includegraphics [scale = 0.70]{./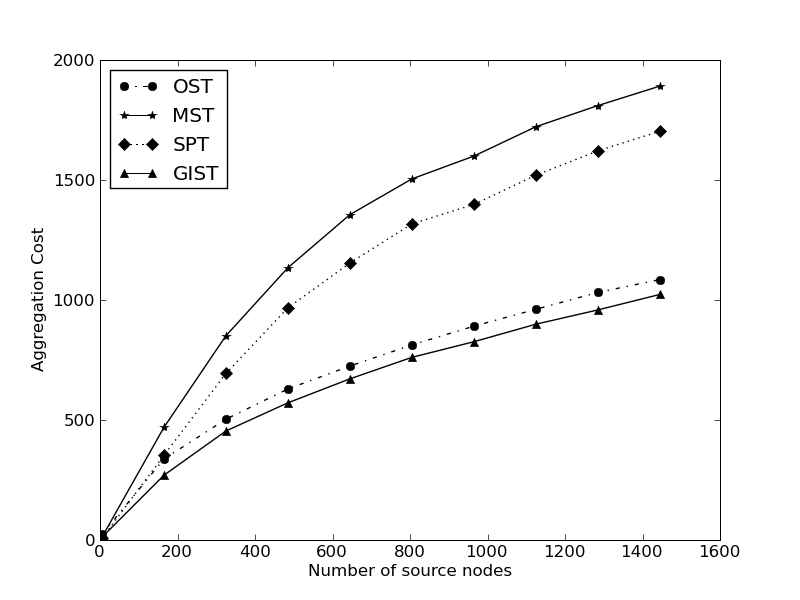}
\caption{Fusion Cost for varying set of source nodes in a 1600-node grid.}\label{fusion}
\end{center}
\end{figure}

\end{document}